\newtheorem{teo}{Theorem}%[section]
\newtheorem{pro}[teo]{Proposition}%[section]
\newtheorem{lem}[teo]{Lemma}%[section]
\theoremstyle{definition}
\newtheorem{rem}[teo]{Remark}%[section]
\newtheorem{de}[teo]{Definition}%[section]
\newtheorem{exa}{Example}%[section]
\title{Classical and Quantum Evaluation Codes at the Trace Roots}
\author{Carlos Galindo, Fernando Hernando and Diego Ruano}
\curraddr{\texttt{Carlos Galindo and Fernando Hernando:} Instituto
Universitario de Matem\'aticas y Aplicaciones de Castell\'on and
Departamento de Matem\'aticas, Universitat Jaume I, Campus de Riu
Sec. 12071 Castell\'{o} (Spain)\\
\texttt{Diego Ruano:} Department of Mathematical Sciences, Aalborg University, Skjernvej 4A, 9220 Aalborg East (Denmark).
}
\email{{\rm Galindo:} galindo@uji.es; {\rm Hernando:} carrillf@uji.es; {\rm Ruano:} diego@math.aau.dk}
\date{}
\thanks{Supported by the Spanish Ministry of Economy/FEDER: grants MTM2015-65764-C3-2-P and MTM2015-69138-REDT, the University Jaume I: grant PB1-1B2015-02 and the Danish Council for Independent Research, grant DFF-4002-00367.}
\keywords{Evaluation Codes; Trace; Subfield-subcodes; Hermitian duality; Quantum codes}
\begin{document}

\begin{abstract}
We introduce a new class of evaluation linear codes by evaluating polynomials at the roots of a suitable trace function. We give conditions for self-orthogonality of these codes and their subfield-subcodes with respect to the Hermitian inner product. They allow us to construct stabilizer quantum  codes over several finite fields which substantially improve the codes in the literature and that are records at \cite{codet} for the binary case. Moreover, we obtain several classical linear codes over the field $\mathbb{F}_4$ which are records at \cite{codet}.
\end{abstract}

\maketitle

\section{Introduction}
A stabilizer (quantum) code $\mathcal{C} \neq \{0\}$ is the common eigenspace of a commutative subgroup of the error group generated by a nice error basis on the space $\mathbb{C}^{q^n}$, where $\mathbb{C}$ denotes the complex numbers, $q$ is a positive power of a prime number and $n$ is a positive integer \cite{kkk}.  The code $\mathcal{C}$ has minimum distance $d$ as long as errors with weight less than $d$ can be detected or have no effect on $\mathcal{C}$ but some error with weight $d$ cannot be detected. Furthermore, if $\mathcal{C}$ has dimension $q^k$ as a $\mathbb{C}$-vector space, then we say that the  code $\mathcal{C}$ has parameters $[[n,k,d]]_q$.

The importance of quantum computation is beyond doubt after \cite{22RBC}, polynomial time algorithms for prime factorization and discrete logarithms on quantum computers have been given. Quantum error-correcting codes are essential for this type of computation since they protect quantum information from decoherence and quantum noise. Quantum codes were first introduced for the binary case, some references are \cite{7kkk, 8kkk, 18kkk, 19kkk, 20kkk, 38kkk, 45kkk},  and, subsequently, for the general case (see for instance \cite{AK, BE, 35kkk, opt, lag3, 71kkk}). The interest on the general case continues to grow, especially after the realization that these codes are useful for fault-tolerant computation.

Stabilizer codes can be constructed from self-orthogonal classical linear codes:

\begin{teo}
\label{el1}
\cite{kkk,Akk}
Let $C$ be a linear $[n,k,d]$ error-correcting  code over the field $\mathbb{F}_{q^2}$ such that $C^{\perp_h }\subseteq C$. Then, there exists an $[[n, 2k -n, \geq d]]_q$ stabilizer code.
\end{teo}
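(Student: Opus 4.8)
The plan is to reduce the statement to the standard correspondence between stabilizer codes over $\mathbb{F}_q$ and additive codes in $\mathbb{F}_q^{2n}$ that are self-orthogonal for the trace-symplectic form, which is part of the cited results and which I take as known. Write $q = p^m$ with $p$ prime, and recall the shape of this correspondence: if $D\subseteq\mathbb{F}_q^{2n}$ is $\mathbb{F}_p$-linear and self-orthogonal for $\langle(a\mid b),(c\mid d)\rangle_s = \Tr_{\mathbb{F}_q/\mathbb{F}_p}(a\cdot d-b\cdot c)$, with $|D| = q^{\,n-K}$, then there is an $[[n,K,d']]_q$ stabilizer code whose minimum distance $d'$ is the symplectic weight $\mathrm{swt}$ of $D^{\perp_s}\setminus D$ when $D\subsetneq D^{\perp_s}$, and of $D^{\perp_s}$ when $D = D^{\perp_s}$; here $\mathrm{swt}(a\mid b)$ counts the indices $i$ with $(a_i,b_i)\neq(0,0)$. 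So it suffices to attach to $C$ such a code $D$ with $K = 2k-n$ and $d'\geq d$.

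First I would fix an $\mathbb{F}_q$-linear isomorphism $\mathbb{F}_{q^2}\cong\mathbb{F}_q^2$ and let $\Phi\colon\mathbb{F}_{q^2}^n\to\mathbb{F}_q^{2n}$ be its coordinatewise extension, arranged so that the $i$-th entry of $v$ lands in the $i$-th coordinate pair of $\Phi(v)$; then $\mathrm{swt}(\Phi(v)) = \wt(v)$ for every $v$, since a nonzero element of $\mathbb{F}_{q^2}$ has nonzero image. Writing $\langle u,v\rangle_h = \sum_{i=1}^n u_iv_i^q$ for the Hermitian form on $\mathbb{F}_{q^2}^n$, I would pick $\lambda\in\mathbb{F}_{q^2}^*$ with $\Tr_{\mathbb{F}_{q^2}/\mathbb{F}_q}(\lambda) = 0$ (such $\lambda$ always exists; one may take $\lambda = 1$ in characteristic $2$) and transport along $\Phi$ the $\mathbb{F}_q$-valued, $\mathbb{F}_q$-bilinear form $B(u,v) = \Tr_{\mathbb{F}_{q^2}/\mathbb{F}_q}(\lambda\langle u,v\rangle_h)$. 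Since $\langle u,u\rangle_h\in\mathbb{F}_q$, one has $B(u,u) = \langle u,u\rangle_h\,\Tr_{\mathbb{F}_{q^2}/\mathbb{F}_q}(\lambda) = 0$, so $B$ is alternating, and it is nondegenerate because the trace and $\langle\cdot,\cdot\rangle_h$ are. Composing with $\Tr_{\mathbb{F}_q/\mathbb{F}_p}$ turns $B$ into a trace-symplectic form, and since all the subspaces we handle are $\mathbb{F}_q$-linear, their orthogonals for $B$ and for $\Tr_{\mathbb{F}_q/\mathbb{F}_p}\circ B$ agree, so the orthogonality bookkeeping can be done with the cleaner form $B$.

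The technical core is the identity $\Phi(C)^{\perp_B} = \Phi(C^{\perp_h})$; transported back to $\mathbb{F}_{q^2}^n$ it reads $\{u : B(u,v) = 0\text{ for all }v\in C\} = C^{\perp_h}$. The inclusion $\supseteq$ is clear. For $\subseteq$, suppose $B(u,v) = 0$ for all $v\in C$; using that $C$ is $\mathbb{F}_{q^2}$-linear and that $\langle u,\mu v\rangle_h = \mu^q\langle u,v\rangle_h$, we get $\Tr_{\mathbb{F}_{q^2}/\mathbb{F}_q}(\lambda\mu^q\langle u,v\rangle_h) = 0$ for all $\mu\in\mathbb{F}_{q^2}$, whence, $\lambda\mu^q$ running over all of $\mathbb{F}_{q^2}$ and the trace being nondegenerate, $\langle u,v\rangle_h = 0$ for every $v\in C$, i.e.\ $u\in C^{\perp_h}$. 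Put $D = \Phi(C^{\perp_h})$. Applying the identity with $C^{\perp_h}$ in place of $C$, and using $(C^{\perp_h})^{\perp_h} = C$, gives $D^{\perp_B} = \Phi(C)$, so that $D\subseteq D^{\perp_B}$ is equivalent to $\Phi(C^{\perp_h})\subseteq\Phi(C)$, i.e.\ to the hypothesis $C^{\perp_h}\subseteq C$. Moreover $|D| = |C^{\perp_h}| = q^{2(n-k)}$, giving $K = 2k-n$; and, because $0\in C^{\perp_h}$, $d' = \mathrm{swt}(D^{\perp_B}\setminus D) = \mathrm{swt}(\Phi(C\setminus C^{\perp_h})) = \min\{\wt(v) : v\in C\setminus C^{\perp_h}\}\geq\min\{\wt(v):v\in C\setminus\{0\}\} = d$, and the degenerate case $D = D^{\perp_B}$ (that is, $2k = n$) gives $d' = \mathrm{swt}(D^{\perp_B}) = d$ in the same way. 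Feeding $D$ into the correspondence yields the claimed $[[n,2k-n,\geq d]]_q$ code.

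The only steps that are not pure bookkeeping are the choice of the twist $\lambda$ making $B$ alternating --- where characteristic $2$ needs a separate (easy) remark --- and the duality identity $\Phi(C)^{\perp_B} = \Phi(C^{\perp_h})$, which is exactly where $\mathbb{F}_{q^2}$-linearity of $C$ and nondegeneracy of $\Tr_{\mathbb{F}_{q^2}/\mathbb{F}_q}$ are used; I expect this identity to be the part requiring the most care. Once it is in place, the dimension count and the distance bound are immediate, and the conclusion follows from the stabilizer-symplectic correspondence together with the weight-preserving property of $\Phi$.
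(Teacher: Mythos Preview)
The paper does not supply its own proof of this statement: Theorem~\ref{el1} is quoted from \cite{kkk,Akk} and used as a black box throughout. Your proposal is a correct reconstruction of the standard argument behind that cited result --- passing from Hermitian self-orthogonality over $\mathbb{F}_{q^2}$ to trace-symplectic self-orthogonality over $\mathbb{F}_q$ via a trace-twisted alternating form, and then invoking the stabilizer--symplectic correspondence --- which is essentially how the proof runs in \cite{kkk}. One detail you leave implicit is that the alternating form $B$ transported by $\Phi$ need not literally coincide with the standard symplectic form $a\cdot d-b\cdot c$ on $\mathbb{F}_q^{2n}$; but nondegenerate alternating $\mathbb{F}_q$-bilinear forms on a $2n$-dimensional space are all isometric, so $\Phi$ can be adjusted to make them match, and this does not disturb the weight-preserving property since the adjustment acts within each coordinate pair. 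With that caveat, your argument is sound and aligns with the literature the paper cites.
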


The symbol $\perp_h$ means dual with respect to Hermitian inner product. An analogous result also holds for Euclidean duality when $C$ is defined over $\mathbb{F}_q$, which gives rise to quantum codes obtained from the CSS construction \cite{20kkk,95kkk}. In this paper, most of our codes will be derived from Theorem \ref{el1}. Although quantum codes were introducecd recently, the literature on this topic is very large. Most papers have addressed the study of quantum MDS, LDCP and BCH codes \cite{Sarvepalli, edel, Akk, lag3, lag1, jin,  f2, lag2, refer}.

In this paper, we introduce a new family of classical linear codes, they are evaluation codes of polynomials in one variable at the set of zeros of a suitable trace map (see Definition \ref{TREC}). The algebraic structure of the set of zeros of the trace map allows us to consider suitable subfield-subcodes, providing a new family of subfield-subcodes different from BCH codes, extended BCH codes or $J$-affine variety codes \cite{galindo-hernando, gal-her-rua, QINP, QINP2}. For designing our codes, we will use {\it consecutive} cyclotomic cosets,  the size and number of these cosets will determine a designed distance and a lower bound for the dimension.

Although we are mainly interested in quantum codes, this new family of classical linear codes allows us to obtain 50 linear code records at \cite{codet} (see Example \ref{ex:uno} in Section \ref{sect3}). We construct linear codes with parameters $[128,85,16]_4$, $[128,79,20]_4$ and $[128,75,22]_4$ improving those with the same length and dimension in \cite{codet}. The remaining records are obtained by shortening the above three codes.

In Theorem \ref{25}, we study the dimension and minimum distance of the subfield-subcodes of this new family of codes and in Theorem \ref{26}, we give conditions for their self-orthogonality with respect to Hermitian inner product. In sum, from linear codes  over $\mathbb{F}_{p^{2r}}$, $p$ a prime number, we get linear codes over $\mathbb{F}_{p^{2s}}$, $s$ being a positive integer that divides $r$,  which give quantum codes over $\mathbb{F}_{p^{s}}$ with good parameters, improving those in the literature.

Apart from the introduction, this paper contains four sections. The definition of our codes and conditions for their self-intersection with respect to Hermitian inner product are given in Section \ref{sect1}. Fundamental results on subfield-codes are presented in Section \ref{sect:subfield}, we will follow the approach in \cite{galindo-hernando, gal-her-rua, QINP, QINP2} for $J$-affine variety codes.  Section \ref{sect2} is the core of the paper, where we consider stabilizer codes from the classical codes defined in the previous section. We also consider codes defined by evaluating at the non-roots of the trace function as well, we will refer to these cades as complementary codes. Finally, Section \ref{sect3} is devoted to provide good examples of our codes.  Apart from the above mentioned classical linear code records, we also give several examples of binary stabilizer quantum codes improving the records at \cite{codet}. In addition, we give tables containing stabilizer codes over $\mathbb{F}_4, \mathbb{F}_5$ and $ \mathbb{F}_7$. For comparing our codes, we consider the codes in \cite{lag3} and show that our codes largely improve them. We also provide new codes with a length that did not exist in the literature and, almost all of them, exceed the quantum Gilbert-Varshamov bounds \cite{mat, feng, kkk}.

\section{Evaluation Codes at the Trace Roots}
\label{sect1}

We devote this section to introduce a new class of evaluation linear codes and study their behavior under Hermitian duality. We are mainly interested in quantum codes although it is worthwhile to mention that their subfield-subcodes provide good classical codes as well. Their subfield-subcodes will be treated in Section \ref{sect2}.

Throughout this paper, let $p$ be a prime number and $r$ and $s$ positive integers such that $s|r$. Set $r=s\cdot n$ and $q=p^s$. Our procedure to obtain stabilizer quantum codes over $\mathbb{F}_q = \mathbb{F}_{p^s}$, using Theorem \ref{el1}, consists of considering subfield-subcodes over $\mathbb{F}_{p^{2s}}$ of classical linear codes over over $\mathbb{F}_{p^{2r}}$.

The {\it trace polynomial} over $\mathbb{F}_{p^{2r}} = \mathbb{F}_{q^{2n}}$ with respect to $\mathbb{F}_{q}$ is defined as
\[
\mathrm{tr}_{2r}^s (X) = X+ X^q+ X^{q^2}+ \cdots + X^{q^{2n-1}},
\]
whose attached polynomial function ({\it trace map}) will be denoted by $\mathrm{tr}_{2r}^s: \mathbb{F}_{q^{2n}} \rightarrow \mathbb{F}_{q}$.

It is well-known that the trace map is a linear transformation over $\mathbb{F}_{q}$ and any linear transformation $\mathbb{F}_{q^{2n}} \rightarrow \mathbb{F}_{q}$ is defined by $x \mapsto \mathrm{tr}_{2r}^s (\beta x)$, for some $\beta \in \mathbb{F}_{q^{2n}} $. Another interesting property of the trace map is that
\[
\mathrm{card} \left\{ \alpha \in \mathbb{F}_{q^{2n}} | \mathrm{tr}_{2r}^s (\alpha) = a \right\}
\]
equals $q^{2n-1}$  for all $a \in \mathbb{F}_q$, and therefore, when $\alpha$ runs over $\mathbb{F}_{q^{2n}}$, one has that $\mathrm{tr}_{2r}^s (\alpha)$ takes each value of $\mathbb{F}_q$ exactly $q^{2n-1}$ times. This fact gives rise to the decomposition
\[
\mathrm{tr}_{2r}^s (X) - a = \prod_{\alpha \in \mathbb{F}_{q^{2n}}, \mathrm{tr}_{2r}^s (\alpha) = a} \left(X- \alpha \right)
\]
and, as a consequence,
\[
X^{q^{2n}} - X = \prod_{a \in \mathbb{F}_q} \left( \mathrm{tr}_{2r}^s (X) - a \right).
\]

Consider now the ideal of the polynomial ring $\mathbb{F}_{q^{2n}}[X]$ generated by $\mathrm{tr}_{2r}^s (X)$, which, by the previous discussion, can also be regarded as the ideal generated by both polynomials $X^{q^{2n}} - X$ and $\mathrm{tr}_{2r}^s (X)$. Consider also
\[
Z= \left\{ \alpha \in \mathbb{F}_{q^{2n}} | \mathrm{tr}_{2r}^s (\alpha) =0 \right\} =\left\{ \alpha_1, \alpha_2, \ldots, \alpha_N \right\},
\]
where $N= q^{2n-1}$.

Next, we define the evaluation map that supports our codes:
\[
\mathrm{ev}_{\mathrm{tr}_{2r}^s}: \mathbb{F}_{q^{2n}}[X] /\langle \mathrm{tr}_{2r}^s (X) \rangle \longrightarrow \mathbb{F}_{q^{2n}}^N, \;\; \mathrm{ev}_{\mathrm{tr}_{2r}^s} (f) = ( f(\alpha_1), f(\alpha_2), \ldots, f(\alpha_N)),
\]
where $f$ denotes both the class in $\mathbb{F}_{q^{2n}}[X] /\langle \mathrm{tr}_{2r}^s (X) \rangle$ and a polynomial in $\mathbb{F}_{q^{2n}}[X] $ representing that class. Notice that we have proved that the map $\mathrm{ev}_{\mathrm{tr}_{2r}^s}$ is well-defined.

Our codes will take advantage from the existing relations in the ring $\mathbb{F}_{q^{2n}}[X] /\langle \mathrm{tr}_{2r}^s (X) \rangle$ (see Remark \ref{rem:compa}) and we will only need to evaluate monomials of degree less than $q^{2n}-1$.

\begin{de}
\label{TREC}
Let $\mathcal{H} = \{0,1, \ldots, q^{2n} -2\}$ and for any non-empty subset $\Delta \subseteq \mathcal{H}$, we define the evaluation code $E_{\Delta, \mathrm{tr}_{2r}^s}$ in $\mathbb{F}_{q^{2n}}^N$, as the linear code generated by the set of vectors $\{ \mathrm{ev}_{\mathrm{tr}_{2r}^s} (X^a) | a \in \Delta\}$.% which are in $\mathbb{F}_{q^{2n}}^N$.
\end{de}

\begin{pro}
\label{13}
Assume that $\Delta \subseteq \{0,1, \ldots, q^{2n-1} -1\}$. Then the dimension of the code $E_{\Delta, \mathrm{tr}_{2r}^s}$ coincides with the cardinality of the set $\Delta$.
\end{pro}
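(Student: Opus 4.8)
The plan is to show that the images $\mathrm{ev}_{\mathrm{tr}_{2r}^s}(X^a)$ for $a \in \Delta$ are linearly independent over $\mathbb{F}_{q^{2n}}$; since by Definition \ref{TREC} they already span $E_{\Delta, \mathrm{tr}_{2r}^s}$, this forces $\dim E_{\Delta, \mathrm{tr}_{2r}^s} = \mathrm{card}(\Delta)$. Because $\Delta \subseteq \{0,1,\dots,q^{2n-1}-1\}$ has at most $q^{2n-1} = N$ elements, it suffices to prove the stronger statement that the $N$ vectors $\mathrm{ev}_{\mathrm{tr}_{2r}^s}(X^j)$, for $j = 0,1,\dots,N-1$, are linearly independent; restricting to the subset indexed by $\Delta$ then gives the claim.

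First I would set up the relevant $N \times N$ matrix $M$ whose $(i,j)$ entry is $\alpha_i^{\,j}$, where $Z = \{\alpha_1,\dots,\alpha_N\}$ is the zero set of the trace map and $j$ ranges over $0,\dots,N-1$. This is a Vandermonde matrix built from the points $\alpha_1,\dots,\alpha_N$. The key observation is that these points are \emph{pairwise distinct} elements of $\mathbb{F}_{q^{2n}}$: indeed, from the factorization $\mathrm{tr}_{2r}^s(X) - a = \prod_{\mathrm{tr}_{2r}^s(\alpha)=a}(X-\alpha)$ recorded in the excerpt (taken with $a=0$), the polynomial $\mathrm{tr}_{2r}^s(X)$ splits into exactly $N = q^{2n-1}$ distinct linear factors over $\mathbb{F}_{q^{2n}}$, so $Z$ consists of $N$ distinct field elements. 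A Vandermonde matrix on distinct nodes is invertible, with determinant $\prod_{i<k}(\alpha_k - \alpha_i) \neq 0$, hence its columns — which are exactly the vectors $\mathrm{ev}_{\mathrm{tr}_{2r}^s}(X^0),\dots,\mathrm{ev}_{\mathrm{tr}_{2r}^s}(X^{N-1})$ — are linearly independent in $\mathbb{F}_{q^{2n}}^N$.

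The only mild subtlety to address is that the map $\mathrm{ev}_{\mathrm{tr}_{2r}^s}$ has domain the quotient ring $\mathbb{F}_{q^{2n}}[X]/\langle \mathrm{tr}_{2r}^s(X)\rangle$, and distinct exponents $a \in \Delta$ must represent distinct classes there — otherwise the spanning set in Definition \ref{TREC} could a priori have repetitions, and counting $\dim$ via $\mathrm{card}(\Delta)$ would be ill-posed. This is where the hypothesis $\Delta \subseteq \{0,\dots,q^{2n-1}-1\}$ is used: the monomials $X^0, X^1, \dots, X^{N-1}$ all have degree strictly less than $N = \deg \mathrm{tr}_{2r}^s(X)$, so they are in normal form modulo the ideal and represent $N$ distinct classes; a fortiori the classes indexed by $\Delta$ are distinct, and the correspondence $a \mapsto \mathrm{ev}_{\mathrm{tr}_{2r}^s}(X^a)$ is a bijection from $\Delta$ onto a subset of a linearly independent set. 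I expect this bookkeeping point — making precise why distinct $a$'s give honestly distinct generators, and that the degree bound is exactly what guarantees it — to be the main thing requiring care; the linear-independence core is just the classical Vandermonde argument once the $\alpha_i$ are known to be distinct.
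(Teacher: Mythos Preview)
Your argument is correct and follows essentially the same approach as the paper's proof: both invoke the Vandermonde matrix on the $N=q^{2n-1}$ distinct trace roots and use that the exponents in $\Delta$ are bounded by $N-1 = \deg \mathrm{tr}_{2r}^s(X)-1$. Your write-up is simply more detailed (making explicit the distinctness of the $\alpha_i$ and the quotient-ring bookkeeping), whereas the paper compresses this into two sentences.
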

\begin{proof}
A generator matrix of the code consists of some rows of a Vandermonde matrix over the field $\mathbb{F}_{q^{2n}}$. These rows are linearly independent because $p^{2r-1}$ is the degree of the polynomial $\mathrm{tr}_{2r}^s(X)$ and $q^{2n-1}-1$ is the maximum degree of the involved monomials.
\end{proof}

Stabilizer quantum codes can be constructed from classical self-orthogonal codes with respect to the Hermitian inner product. Recall that the Hermitian inner product of two vectors $\mathbf{a} = (a_1, a_2, \ldots a_N)$ and $\mathbf{b} = (b_1, b_2, \ldots b_N)$ in $\mathbb{F}_{q^{2n}}^N$ is defined as
\[
\mathbf{a} \cdot_h \mathbf{b} := \sum_{j=1}^N a_j b_j^{q^n}.
\]
Hence, we will look for self-orthogonal codes $E_{\Delta, \mathrm{tr}_{2r}^s}$ with respect to this inner product, that is codes which satisfy
\[
 E_{\Delta, \mathrm{tr}_{2r}^s} \subseteq \left( E_{\Delta, \mathrm{tr}_{2r}^s}\right)^{\perp_h} := \left\{ \mathbf{b} \in \mathbb{F}_{q^{2n}}^N |  \mathbf{a} \cdot_h \mathbf{b}= 0 \mbox{ for all } \mathbf{a} \in E_{\Delta, \mathrm{tr}_{2r}^s} \right\}.
\]

The Euclidean inner product will be used in our development as well. For $\mathbf{a}$ and $\mathbf{b}$ in $\mathbb{F}_{q^{2n}}^N$, it is defined as $\mathbf{a} \cdot \mathbf{b} := \sum_{j=1}^N a_j b_j$. We start with a lemma which will allow us to derive the first result on the orthogonality of the generators of our codes.

\begin{lem}
\label{Newton}
Let $f$ be a polynomial in $\mathbb{F}_{q^{2n}} [X]$ of degree $m$, $f= \sum_{j=1}^m a_j X^j$ with $a_m=1$. Let $\{x_1, x_2, \ldots, x _m\}$ be the set of  roots of $f$ in $\mathbb{F}_{q^{2n}}$. Denote by $s_k$, $1 \leq k \leq m$, the power sum $s_k = \sum_{j=1}^m x_j^k$. Then
\begin{equation}
\label{estrella}
\left(  \sum_{j=0}^{i-1} a_{m-j} s_{i-j} \right) + i a_{m-i} =0,
\end{equation}
when $i \leq m$. Otherwise ($i>m$), it holds
\[
\sum_{j=0}^{m-1} a_{m-j} s_{i-j} =0.
\]
\end{lem}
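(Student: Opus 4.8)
The statement is a form of Newton's identities relating the power sums $s_k$ of the roots of $f$ to its coefficients $a_j$. The plan is to work with the formal logarithmic derivative of $f$ over $\mathbb{F}_{q^{2n}}$, exactly as in the classical proof, and then read off the coefficient relations; the only care needed is to track the two regimes $i \le m$ and $i > m$ separately, and to note where the characteristic-$p$ issue with the factor $i a_{m-i}$ does or does not matter (it appears as stated, $i a_{m-i}$, with $i$ interpreted in $\mathbb{F}_{q^{2n}}$, so nothing special is required).

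First I would write $f(X) = \prod_{j=1}^m (X - x_j)$, which is legitimate since by hypothesis all $m$ roots lie in $\mathbb{F}_{q^{2n}}$ and $a_m = 1$. Taking the formal derivative and dividing,
\[
\frac{f'(X)}{f(X)} = \sum_{j=1}^m \frac{1}{X - x_j}.
\]
Next I would expand each term $\frac{1}{X - x_j} = \sum_{k \ge 0} x_j^k X^{-k-1}$ as a formal power series in $X^{-1}$, sum over $j$, and obtain
\[
\frac{f'(X)}{f(X)} = \sum_{k \ge 0} s_k X^{-k-1},
\]
with the convention $s_0 = m$. Clearing denominators gives the key formal identity $f'(X) = f(X) \cdot \left( \sum_{k \ge 0} s_k X^{-k-1} \right)$, to be compared coefficient by coefficient.

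Then I would expand both sides in powers of $X$. Writing $f(X) = \sum_{j=0}^m a_j X^j$ and $f'(X) = \sum_{j=1}^m j a_j X^{j-1}$, the coefficient of $X^{m-i-1}$ on the right-hand side is $\sum_{j} a_{m-j} s_{i-j}$, where the sum runs over those $j \ge 0$ with $0 \le m - j \le m$ and $i - j \ge 1$. For $1 \le i \le m$ this sum is $\sum_{j=0}^{i-1} a_{m-j} s_{i-j}$, and on the left-hand side the coefficient of $X^{m-i-1}$ is $(m-i) a_{m-i}$; rearranging $(m-i) a_{m-i} = \sum_{j=0}^{i-1} a_{m-j} s_{i-j} + \text{(the } j=i \text{ term } a_{m-i} s_0)$ and using $s_0 = m$ yields $m a_{m-i} - (m-i)a_{m-i} = i a_{m-i}$ on one side, giving precisely \eqref{estrella}. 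For $i > m$ the left-hand side contributes nothing (there is no $X^{m-i-1}$ term in $f'$ since $m - i - 1 < -1$, i.e.\ the relevant coefficient of $f'$ vanishes), and the constraint $0 \le m-j \le m$ forces $0 \le j \le m$, while one checks the $j = m$ term also has $i - j = i - m \ge 1$, so the right-hand side coefficient is $\sum_{j=0}^{m} a_{m-j} s_{i-j}$; but this is not quite the displayed sum $\sum_{j=0}^{m-1}$, so here I would instead compare coefficients of $X^{-e}$ for $e \ge 2$ in the identity $f'(X) = f(X) \sum_{k\ge 0} s_k X^{-k-1}$, where $f'$ has no such terms, giving $0 = \sum_{j=0}^{m} a_{j} s_{e-1-(m-j)}$ after reindexing — matching the stated second identity once one sets $i = e - 1 + $ (an index shift) and drops terms with index out of range.

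The main obstacle, and the place to be careful, is purely bookkeeping: getting the index ranges in the Cauchy product right so that the boundary term $j = 0$ (which carries $a_m = 1$ and produces $s_i$) and the term $j = i$ (which produces $s_0 = m$ and combines with the $f'$ coefficient $(m-i)a_{m-i}$ to give the clean $i a_{m-i}$) land exactly as in \eqref{estrella}, and symmetrically checking that in the regime $i > m$ the top index is $m-1$ rather than $m$ because the putative $j = m$ term would require $s_{i-m}$ with a shifted index that either coincides with an already-counted term or is forced by the degree of $f'$. Once the formal power series identity $f' = f \cdot \sum_{k \ge 0} s_k X^{-k-1}$ is in hand, everything else is a routine extraction of coefficients, and no hypothesis beyond "all roots lie in $\mathbb{F}_{q^{2n}}$" and "$a_m = 1$" is used.
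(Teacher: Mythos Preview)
Your generating-function / logarithmic-derivative argument is a correct, self-contained derivation of Newton's identities, whereas the paper takes the shorter route of simply quoting the identities from Cox--Little--O'Shea in the form
\[
s_k + \sum_{i=1}^{k-1} (-1)^i \sigma_i s_{k-i} + (-1)^k k \sigma_k = 0
\quad (1\le k\le m),
\qquad
s_k + \sum_{i=1}^{m} (-1)^i \sigma_i s_{k-i} = 0
\quad (k>m),
\]
and then substituting $a_j = (-1)^{m-j}\sigma_{m-j}$. So your route is genuinely different: it avoids an external citation and works directly over $\mathbb{F}_{q^{2n}}$, at the price of doing the Cauchy-product bookkeeping by hand; the paper's route is a two-line reduction to a standard reference.

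Your treatment of the case $i\le m$ is fine. In the case $i>m$, however, you correctly arrive at $\sum_{j=0}^{m} a_{m-j} s_{i-j}=0$ and then get tangled trying to explain why the stated upper index is $m-1$ rather than $m$; the reasons you offer (``coincides with an already-counted term'' or ``forced by the degree of $f'$'') are not correct. The resolution is much simpler and is built into the hypothesis: the lemma writes $f=\sum_{j=1}^{m} a_j X^j$, so $a_0=0$, and hence the $j=m$ term $a_{0}\,s_{i-m}$ vanishes, collapsing $\sum_{j=0}^{m}$ to $\sum_{j=0}^{m-1}$. With that one-line observation your argument is complete.
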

\begin{proof}
It suffices to consider that the elementary symmetric elements $\sigma_k$, $1 \leq k \leq m$:
\[
\sigma_k = \sum_{i_1 < i_2 < \cdots < i_k} x_{i_1} x_{i_2} \cdots x_{i_k}
\]
and the Newton identities \cite[proof of Theorem 8 in Chapter 7, Section 1]{CLO} prove that
\[
s_k + \sum_{i=1}^{k-1} (-1)^i \sigma_i s_{k-i}  + (-1)^k k \sigma_k = 0,
\]
when $1 \leq k \leq m$. Moreover, for $k > m$,
\[
s_k + \sum_{i=1}^{m} (-1)^i \sigma_i s_{k-i}   = 0.
\]
Finally, the result holds since $a_j = (-1)^{m-j} \sigma_{m-j}$ \cite[Problem 4 in Chapter 7, Section 1]{CLO}.
\end{proof}
\begin{pro}
\label{16}
With  the above notations, recall that $p^{2r}  =q^{2n}$, one has that
\[
\mathrm{ev}_{\mathrm{tr}_{2r}^s} (X^k) \cdot \mathrm{ev}_{\mathrm{tr}_{2r}^s} (X^0) =0,
\]
for $1 \leq k < q^{2n-1} -1$ and
\[\mathrm{ev}_{\mathrm{tr}_{2r}^s} (X^{q^{2n-1} -1}) \cdot \mathrm{ev}_{\mathrm{tr}_{2r}^s} (X^0) \neq 0.
\]
\end{pro}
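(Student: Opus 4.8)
The plan is to express the Euclidean inner product $\mathrm{ev}_{\mathrm{tr}_{2r}^s}(X^k)\cdot \mathrm{ev}_{\mathrm{tr}_{2r}^s}(X^0) = \sum_{i=1}^{N}\alpha_i^k$ as the power sum $s_k$ of the $N = q^{2n-1}$ roots of the trace polynomial $\mathrm{tr}_{2r}^s(X)$, and then apply Lemma \ref{Newton} to that polynomial. The key structural observation is that $\mathrm{tr}_{2r}^s(X) = X + X^q + X^{q^2} + \cdots + X^{q^{2n-1}}$ is extremely sparse: its only nonzero coefficients are the $2n$ coefficients equal to $1$ sitting in degrees $q^0, q^1, \ldots, q^{2n-1}$. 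Writing $m = q^{2n-1} = N$ for its degree and $f = \sum_{j=0}^m a_j X^j$, we have $a_m = 1$, and the next nonzero coefficient below $a_m$ is $a_{q^{2n-2}}$, i.e. $a_{m-j} = 0$ for all $1 \le j < q^{2n-1} - q^{2n-2}$.

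First I would treat the range $1 \le k < q^{2n-1} - 1$. Since such $k$ satisfies $k \le m$, I apply the first part of Lemma \ref{Newton} with $i = k$ to get $\left(\sum_{j=0}^{k-1} a_{m-j} s_{k-j}\right) + k\, a_{m-k} = 0$. Now I argue by induction on $k$ that $s_k = 0$ in this range. The point is that for $1 \le k < q^{2n-1} - q^{2n-2}$ every coefficient $a_{m-j}$ with $1 \le j \le k-1$ vanishes (the gap below the leading term), so the identity collapses to $a_m s_k + k\, a_{m-k} = s_k + k\, a_{m-k} = 0$; and $a_{m-k} = 0$ as well because $m - k$ is not among the trace exponents in that range, giving $s_k = 0$. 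Once the base range is cleared, for larger $k$ (still $< q^{2n-1}-1$) the only surviving terms $a_{m-j} s_{k-j}$ would need both $m - j$ to be a trace exponent and $0 \le k - j < k$; but $s_{k-j} = 0$ by the induction hypothesis whenever $k - j \ge 1$, and the $j = k$ term $a_{m-k}$ vanishes unless $m - k$ is a trace exponent — and $m - k = q^{2n-1} - k$ with $1 \le k < q^{2n-1}-1$ is never one of $q^0,\ldots,q^{2n-1}$ (it lies strictly between $q^{2n-2}$-ish values and $q^{2n-1}$; this is the little arithmetic check to nail down). Hence $s_k + k\,a_{m-k} = s_k = 0$, completing the induction and the first assertion.

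For the second assertion, $k = q^{2n-1} - 1 = m - 1$, so I apply Lemma \ref{Newton} with $i = m - 1 \le m$: $\left(\sum_{j=0}^{m-2} a_{m-j} s_{m-1-j}\right) + (m-1) a_{1} = 0$. By the previous paragraph $s_\ell = 0$ for $1 \le \ell \le m-2$, so every summand with $j \le m-2$ and $m-1-j \ge 1$ drops; the surviving contributions are the $j = m-1$ term, which is absent from the sum (it runs only to $j = m-2$), and — wait — actually the sum index only goes to $m-2$, so we directly get $a_m s_{m-1} + (m-1)a_1 = s_{m-1} + (m-1)a_1 = 0$ after discarding the vanishing middle terms, whence $s_{m-1} = -(m-1)a_1$. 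Now $a_1$ is the coefficient of $X^{q^0} = X^1$ in $\mathrm{tr}_{2r}^s(X)$, which equals $1$, and $m - 1 = q^{2n-1} - 1 \equiv -1 \pmod p$ is nonzero in $\mathbb{F}_{q^{2n}}$; therefore $s_{m-1} = 1 \neq 0$, which is exactly $\mathrm{ev}_{\mathrm{tr}_{2r}^s}(X^{q^{2n-1}-1})\cdot \mathrm{ev}_{\mathrm{tr}_{2r}^s}(X^0) \neq 0$.

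The main obstacle is bookkeeping rather than depth: one must be careful that $0 \notin Z$ is handled (it is, since $\mathrm{tr}_{2r}^s(0) = 0$, so $0$ is a root of $f$ and is included among $\alpha_1,\ldots,\alpha_N$, contributing $0$ to every $s_k$ with $k \ge 1$ and hence harmless), and above all that the exponents $m - k = q^{2n-1} - k$ appearing as potential nonzero coefficients $a_{m-k}$ genuinely miss the set $\{q^0, q^1, \ldots, q^{2n-1}\}$ throughout $1 \le k < q^{2n-1}-1$ — the one spot where an explicit, if elementary, estimate on powers of $q$ is needed. Everything else is a direct unwinding of the Newton identities via Lemma \ref{Newton}.
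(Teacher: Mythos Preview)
Your overall strategy matches the paper's: identify $\mathrm{ev}_{\mathrm{tr}_{2r}^s}(X^k)\cdot\mathrm{ev}_{\mathrm{tr}_{2r}^s}(X^0)$ with the power sum $s_k$ of the roots of $\mathrm{tr}_{2r}^s(X)$ and run Newton's identities (Lemma~\ref{Newton}) inductively. The computation for $k=q^{2n-1}-1$ is also correct and agrees with the paper's.

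There is, however, a genuine error in your inductive step for the range $q^{2n-1}-q^{2n-2}\le k<q^{2n-1}-1$. After killing the sum via the induction hypothesis you are left with $s_k=-k\,a_{m-k}$, and you then assert that $a_{m-k}=0$ because ``$m-k=q^{2n-1}-k$ with $1\le k<q^{2n-1}-1$ is never one of $q^0,\ldots,q^{2n-1}$''. This is false: as $k$ ranges over $1,\ldots,q^{2n-1}-2$, the value $m-k$ ranges over $2,\ldots,q^{2n-1}-1$, and this interval contains every $q^j$ with $1\le j\le 2n-2$. Concretely, for $k=q^{2n-1}-q^{2n-2}$ one has $m-k=q^{2n-2}$ and $a_{m-k}=1$, so your ``little arithmetic check'' cannot be nailed down --- it is simply wrong.

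The repair, which is exactly what the paper does, is to observe that whenever $m-k=q^{j}$ with $1\le j\le 2n-2$ (the only cases in the range where $a_{m-k}\neq 0$), one has $k=q^{2n-1}-q^{j}=q^{j}(q^{2n-1-j}-1)$, which is divisible by $q$ and hence by $p$. Therefore $k\,a_{m-k}=0$ in $\mathbb{F}_{q^{2n}}$ by the characteristic, and $s_k=0$ follows. The characteristic-$p$ argument is essential here; the coefficient $a_{m-k}$ does \emph{not} vanish at those spots.
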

\begin{proof}
This result is a consequence of Lemma \ref{Newton}. Namely, notice that, with the notation as in Lemma \ref{Newton},  $\mathrm{ev}_{\mathrm{tr}_{2r}^s} (X^k) \cdot \mathrm{ev}_{\mathrm{tr}_{2r}^s} (X^0) = s_k$, where one shall consider the polynomial $\mathrm{ev}_{\mathrm{tr}_{2r}^s}$ instead of $f$ and $N$ instead of $m$. In addition, all the coefficients $a_j$ are equal to zero, but $a_1, a_q, a_{q^2}, \ldots, a_{q^{2n -1}}$ which are equal to $1$. Now Formula (\ref{estrella}) with $i=1$ proves that $s_1 = -a_{N-1} =0$; with $i=2$, $s_2= -2 a_{N-2} =0$, and iterating the same argument for consecutive values, one has that $s_k =0$ for indices $1 \leq k < q^{2n-1} - q^{2n-2}$. Again Formula (\ref{estrella}), for $i= q^{2n-1} - q^{2n-2}$, proves that $s_{q^{2n-1}- q^{2n-2}}=0$ since we work over a field of characteristic $p$. It is clear that the same procedure proves that $s_k =0$ for $1 \leq k < q^{2n-1} - 1$.

Finally $s_{q^{2n-1} - 1} \neq 0$, because Formula (\ref{estrella}) for $i=q^{2n-1} - 1$ shows that
\[
s_{q^{2n-1} - 1} + a_{q^{2n-1} - 1} s_{q^{2n-1} - 2}+ \cdots + a_1 (q^{2n-1} - 1) =0,
\]
and then $s_{q^{2n-1} - 1} = - ( q^{2n-1} - 1) = 1 \neq 0$, which concludes the proof.
\end{proof}

The map $\mathrm{ev}_{\mathrm{tr}_{2r}^s}$ is defined for elements in $ \mathbb{F}_{q^{2n}}[X] /\langle \mathrm{tr}_{2r}^s (X) \rangle$ which have as class representative polynomials of degree lower than $q^{2n-1}$. Proposition \ref{16} shows that the evaluation by $\mathrm{ev}_{\mathrm{tr}_{2r}^s}$ of a (class of a) polynomial $f$ in $\mathbb{F}_{q^{2n}}[X]$ is Euclidean orthogonal to $\mathrm{ev}_{\mathrm{tr}_{2r}^s} (X^0)$ if and only if the mentioned representative does not contain the monomial $X^{q^{2n-1}-1}$. This proves the following result which complements Proposition \ref{16}.

\begin{pro}
\label{17}
With the above notation, for $k \le q^{2n}-2$, the Euclidean inner product
\[
\mathrm{ev}_{\mathrm{tr}_{2r}^s} (X^k) \cdot \mathrm{ev}_{\mathrm{tr}_{2r}^s} (X^0) = 0
\]
if and only if the polynomial of degree less than $q^{2n-1}$ representing the class $X^k + \langle \mathrm{tr}_{2r}^s (X) \rangle$ does not contain the monomial $X^{q^{2n-1}-1}$.
\end{pro}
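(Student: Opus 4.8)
The plan is to reduce everything to the multiplicative structure of the quotient ring $\mathbb{F}_{q^{2n}}[X] /\langle \mathrm{tr}_{2r}^s (X) \rangle$ together with Proposition~\ref{16}. First I would recall that, since $\mathrm{tr}_{2r}^s(X)$ has degree $p^{2r-1} = q^{2n-1}$, every class in the quotient has a unique representative of degree less than $q^{2n-1}$; write this representative of $X^k + \langle \mathrm{tr}_{2r}^s (X) \rangle$ as $g(X) = \sum_{i=0}^{q^{2n-1}-1} c_i X^i$ with $c_i \in \mathbb{F}_{q^{2n}}$. Because the evaluation map $\mathrm{ev}_{\mathrm{tr}_{2r}^s}$ is well-defined on the quotient (as established before Definition~\ref{TREC}), we have $\mathrm{ev}_{\mathrm{tr}_{2r}^s}(X^k) = \mathrm{ev}_{\mathrm{tr}_{2r}^s}(g) = \sum_{i=0}^{q^{2n-1}-1} c_i \, \mathrm{ev}_{\mathrm{tr}_{2r}^s}(X^i)$, using $\mathbb{F}_{q^{2n}}$-linearity of $\mathrm{ev}_{\mathrm{tr}_{2r}^s}$.

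Next I would take the Euclidean inner product with $\mathrm{ev}_{\mathrm{tr}_{2r}^s}(X^0)$ and distribute it over the linear combination, obtaining
\[
\mathrm{ev}_{\mathrm{tr}_{2r}^s}(X^k) \cdot \mathrm{ev}_{\mathrm{tr}_{2r}^s}(X^0) = \sum_{i=0}^{q^{2n-1}-1} c_i \left( \mathrm{ev}_{\mathrm{tr}_{2r}^s}(X^i) \cdot \mathrm{ev}_{\mathrm{tr}_{2r}^s}(X^0) \right).
\]
Now Proposition~\ref{16} evaluates each term of this sum: the factor $\mathrm{ev}_{\mathrm{tr}_{2r}^s}(X^i) \cdot \mathrm{ev}_{\mathrm{tr}_{2r}^s}(X^0)$ vanishes for $1 \le i < q^{2n-1}-1$, it also vanishes for $i=0$ since in that case both vectors are the all-ones vector and the sum equals $N = q^{2n-1} = 0$ in $\mathbb{F}_{q^{2n}}$ (as $p \mid N$), and it equals $1 \neq 0$ precisely when $i = q^{2n-1}-1$. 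Hence the entire sum collapses to a single term: $\mathrm{ev}_{\mathrm{tr}_{2r}^s}(X^k) \cdot \mathrm{ev}_{\mathrm{tr}_{2r}^s}(X^0) = c_{q^{2n-1}-1}$. Therefore the inner product is zero if and only if $c_{q^{2n-1}-1} = 0$, i.e.\ if and only if the reduced representative $g$ does not contain the monomial $X^{q^{2n-1}-1}$, which is exactly the claim.

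The only point requiring a little care — and the closest thing to an obstacle — is the treatment of the constant term $c_0$: one must notice that $\mathrm{ev}_{\mathrm{tr}_{2r}^s}(X^0) \cdot \mathrm{ev}_{\mathrm{tr}_{2r}^s}(X^0) = N \cdot 1 = q^{2n-1}$, which is $0$ in characteristic $p$, so the $i=0$ term does not spoil the argument (this is the same vanishing of $s_0$-type sums implicitly used in Proposition~\ref{16}). Everything else is a direct bookkeeping of which coefficients of the reduced representative survive, and no further computation is needed. \ep
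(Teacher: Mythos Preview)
Your proof is correct and follows essentially the same approach as the paper: the paper states Proposition~\ref{17} as an immediate consequence of Proposition~\ref{16} via linearity of the evaluation map, which is exactly what you spell out by expanding the reduced representative $g(X)=\sum c_i X^i$ and collapsing the inner product to $c_{q^{2n-1}-1}$. Your explicit handling of the $i=0$ term (observing that $N=q^{2n-1}$ vanishes in characteristic~$p$) is a useful detail that the paper leaves implicit, since Proposition~\ref{16} is only stated for $1\le k\le q^{2n-1}-1$.
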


Next, we give a condition implying that some classes as above do not contain $X^{q^{2n-1}-1}$ in their representatives.

\begin{pro}
\label{18}
With the above notation, let $i,j$ be integers such that
\[
0 \leq i, j<q^n-\lfloor\frac{(q-1)}{2}\rfloor q^{n-1}-\cdots-\lfloor\frac{(q-1)}{2}\rfloor q-1,
\]
which are not both zero. Then, for $0 < m \leq n$, the representative of the class $X^{i+j q^m} +  \langle \mathrm{tr}_{2r}^s (X) \rangle$ of degree less than $q^{2n-1}$ does not contain the monomial $X^{q^{2n-1}-1}$.
\end{pro}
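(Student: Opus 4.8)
I would prove Proposition~\ref{18} by running the polynomial reduction of $X^{i+jq^m}$ modulo $\mathrm{tr}_{2r}^s(X)$ explicitly and controlling the base-$q$ digit sums of the exponents that occur; write $\mathrm{ds}(a)$ for the sum of the base-$q$ digits of a nonnegative integer $a$. Since $\mathrm{tr}_{2r}^s(X) = X + X^q + \cdots + X^{q^{2n-1}}$, in the quotient ring one has $X^{q^{2n-1}} = -(X + X^q + \cdots + X^{q^{2n-2}})$, and the representative of degree $< q^{2n-1}$ of a class is obtained by the usual division algorithm: repeatedly replace each term $cX^f$ with $f \ge q^{2n-1}$ by $-c\sum_{\ell=0}^{2n-2} X^{f - q^{2n-1} + q^\ell}$, which is legitimate because $q^\ell < q^{2n-1}$ for $\ell \le 2n-2$ forces every rewritten exponent to be strictly smaller than $f$, and iterate until no exponent is $\ge q^{2n-1}$.

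The heart of the proof is that this rewriting never increases digit sums. First, all exponents occurring throughout stay below $q^{2n}$: the initial one satisfies $i + jq^m \le (q^n-1)(1+q^n) = q^{2n}-1$ because $i,j < B \le q^n$ and $m \le n$, and exponents only decrease afterwards. Hence, whenever a term $cX^f$ is rewritten, $0 \le f - q^{2n-1} < q^{2n} - q^{2n-1} = (q-1)q^{2n-1}$, so the base-$q$ digit of $f - q^{2n-1}$ in position $2n-1$ is at most $q-2$; therefore $\mathrm{ds}(f) = \mathrm{ds}(f - q^{2n-1}) + 1$, while $\mathrm{ds}(f - q^{2n-1} + q^\ell) \le \mathrm{ds}(f - q^{2n-1}) + 1 = \mathrm{ds}(f)$ for each $\ell$, since adding a single power of $q$ raises the base-$q$ digit sum by at most $1$. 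Consequently every monomial $X^g$ in the fully reduced representative of $X^{i+jq^m}$ has $\mathrm{ds}(g) \le \mathrm{ds}(i+jq^m) \le \mathrm{ds}(i) + \mathrm{ds}(jq^m) = \mathrm{ds}(i) + \mathrm{ds}(j)$, using subadditivity of $\mathrm{ds}$ and that multiplying by $q^m$ only shifts digits.

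It then suffices to prove $\mathrm{ds}(i) + \mathrm{ds}(j) < (q-1)(2n-1)$ for all $0 \le i, j < B$, because $(q-1)(2n-1) = \mathrm{ds}(q^{2n-1}-1)$ is the maximal digit sum of an integer in $\{0,1,\ldots,q^{2n-1}-1\}$, attained only at $q^{2n-1}-1$. This is exactly what the hypothesis on $i$ and $j$ delivers: a short computation identifying the digit-richest integer strictly below $B$ — done separately for $q$ even and $q$ odd because of the floor — shows that $\mathrm{ds}(i) \le \tfrac12\big((q-1)(2n-1)-1\big)$ whenever $0 \le i < B$, and this is precisely the role of the constant $\lfloor(q-1)/2\rfloor$ appearing in $B$. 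Therefore $q^{2n-1}-1$ does not occur among the exponents of the reduced representative of $X^{i+jq^m}$, which is the assertion.

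The step I expect to be the main obstacle is this last one: computing $\max_{0\le i<B}\mathrm{ds}(i)$ from the explicit value of $B$ and checking that twice it is strictly less than $(q-1)(2n-1)$. This is elementary but somewhat delicate base-$q$ bookkeeping, sensitive to the parity of $q$, and it is where the precise shape of $B$ is used. The only other place the hypotheses on $i,j,m$ are needed is the (easy) bound ensuring that every exponent arising in the reduction remains below $q^{2n}$.
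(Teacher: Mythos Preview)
Your argument is correct and takes a genuinely different route from the paper's proof. The paper splits into two cases: for $m<n$ it argues directly that $i+jq^m<q^{2n-1}$ and then checks, via an explicit look at the $q$-adic digits of $i$ and $j$, that $i+jq^m\neq q^{2n-1}-1$; for $m=n$ it performs the substitution $X^{q^{2n-1}}=-(X+X^q+\cdots+X^{q^{2n-2}})$ essentially once (using the multinomial expansion of $(X^{q^{2n-1}})^{b_{n-1}}$), and then analyses the resulting exponents digit by digit, separately for $q$ odd and $q$ even, to show that not all $2n-1$ digits can equal $q-1$. Your approach replaces this case analysis by a single invariant: the base-$q$ digit sum is non-increasing under the division step $X^f\mapsto -\sum_\ell X^{f-q^{2n-1}+q^\ell}$ (because subtracting $q^{2n-1}$ drops $\mathrm{ds}$ by exactly one, while adding $q^\ell$ raises it by at most one), so every exponent in the reduced representative has $\mathrm{ds}\le \mathrm{ds}(i)+\mathrm{ds}(j)$, and the hypothesis on $B$ forces this to be strictly below $(q-1)(2n-1)=\mathrm{ds}(q^{2n-1}-1)$. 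This is cleaner and treats all $m$ uniformly; it also makes transparent \emph{why} the particular threshold $B$ appears: it is exactly the bound guaranteeing $\max_{0\le i<B}\mathrm{ds}(i)\le \tfrac12\big((q-1)(2n-1)-1\big)$. The paper's approach, in exchange, is more hands-on and does not require the auxiliary subadditivity facts about $\mathrm{ds}$.

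Your identification of the ``main obstacle'' is accurate: the verification that $\max_{0\le i<B}\mathrm{ds}(i)\le \tfrac12\big((q-1)(2n-1)-1\big)$ is an elementary but slightly fiddly computation. One clean way is to observe that $B_n=\lfloor\tfrac{q}{2}\rfloor\, q^{n-1}+B_{n-1}$ with $B_1=q-1$ (so the base-$q$ digits of $B_n$ are $\lfloor q/2\rfloor,\ldots,\lfloor q/2\rfloor,q-1$), and use this to set up a short recursion for $M_n:=\max_{i<B_n}\mathrm{ds}(i)$; this yields $2M_n=(2n-1)(q-1)-1$ for $q$ even and $2M_n=(2n-1)(q-1)-2$ for $q$ odd, in both cases $<(2n-1)(q-1)$ for $n\ge 2$. (For $n=1$ the inequality $2M_1<q-1$ fails, but the proposition as stated is already problematic there; this is an issue with the statement, not with your method.)
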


\begin{proof}
Write $\delta = q - \lfloor\frac{(q-1)}{2}\rfloor$ and notice that $\delta = \frac{(q+1)}{2}$ if $q$ is odd and it equals $\frac{(q+2)}{2}$ otherwise. Thus, the bound $q^n-\lfloor\frac{(q-1)}{2}\rfloor q^{n-1}-\cdots-\lfloor\frac{(q-1)}{2}\rfloor q-1$ can be expressed as
\begin{equation}
\label{dosestrellas}
\delta q^{n-1} -\lfloor\frac{(q-1)}{2}\rfloor q^{n-2}-\cdots-\lfloor\frac{(q-1)}{2}\rfloor q-1.
\end{equation}
Now, consider the $q$-adic expansion of $i$ and $j$:
\[
i= \sum_{k=0}^{n-1} a_k q^k, \;\;\; j= \sum_{k=0}^{n-1} b_k q^k.
\]
For  $i$ (and analogously for $j$), the expression in (\ref{dosestrellas}) shows that:

\begin{itemize}
\item {\it  When $q$ is even}, $a_{n-1} \leq \delta -1$ and when $a_{n-1} = \delta -1$, then $a_{n-2} \leq \delta -1$, fact that we can iterate and claim that $a_0 \leq \delta -1$, whenever $a_1 = a_2 = \cdots = a_{n-1} = \delta -1$. There exists an exception for $q=2$, in this case $\delta=2$ and $a_0 =0$, whenever $a_1 = a_2 = \cdots = a_{n-1} = 1$.

\item {\it  Otherwise ($q$ is odd)}, one also has that $a_{n-1} \leq \delta -1$. If $a_{n-1} = \delta -1$, then $a_{n-2} \leq \delta -1$ and, as above, this argument can be repeated and one gets that $a_0 \leq \delta$, when $a_1 = a_2 = \cdots = a_{n-1} = \delta - 1$.
\end{itemize}

We divide our reasoning in two cases:

{\it Case 1, $m < n$:} then $n -1 = m + m_1$, where $m_1 \geq 0$. Then
\[
i+j q^m= a_0+a_1q+\cdots+(a_m+b_0)q^m+\cdots+(a_{n-1}+b_{m_1})q^{n-1}
\]
\[
+b_{m_1+1}q^{n}+\cdots+b_{n-1}q^{n+m-1} \leq 2 q^{n} +b_{m_1 +1}q^{n} + \cdots + b_{n-1} q^{n+m-1} \leq
\]
\[
(b_{n-1} +1)  q^{n+m-1} < q^{2n -1} -1,
\]
the last inequality holds because otherwise $m=n-1$ (notice that $m<n$) and $b_{n-1} + 1 = q$ and then
\[
i +  j q^m = a_0 + \cdots + (a_{n-1} + b_0) q^{n-1} + b_1 q^n + \cdots + b_{n-1} q^{2n-2}.
\]
The last expression is equal to $q^{n-1} -1$ only when all the coefficients are exactly equal to $q-1$, which gives a contradiction because $a_0 \leq \delta$ as we indicated previously.

{\it Case 2, $m=n$:} then,
\[
i + j q^m = i + j q^n = a_0 + a_1 q + \cdots + a_{n-1} q^{n-1} + b_0 q^n + b_1 q^{n+1} +  \cdots + b_{n-1} q^{2n-1}.
\]
This expression is the exponent of  a term in $X$ which can be written as
\begin{equation}\label{cruzcuadro}
X^{a_0 + a_1 q + \cdots + b_{n-2} q^{2n-2} } ( X^{q^{2n-1}})^{b_{n-1}}.
\end{equation}
Since we are considering the class of the term in $\mathbb{F}_{q^{2n}}[X] /\langle \mathrm{tr}_{2r}^s (X) \rangle$, we can replace the monomial $X^{q^{2n-1}}$ with the polynomial $-X-X^q - \cdots - X^{q^{2n-2}}$. The multinomial theorem shows that the expression in (\ref{cruzcuadro}) can be expressed as a sum of terms where the exponents of the attached monomials are of the form
\[
a_0 + a_1 q + \cdots + a_{n-1} q^{n-1} + b_0 q^n+ \cdots + b_{n-2} q^{2n-2} + \sum_{k=0}^{2n-2} c_k q^k.
\]
Notice that $\sum_{k=0}^{2n-2} c_k q^k$ is the $q$-adic expansion of the exponent of some monomial in
\begin{equation}
\label{T}
(-X-X^q - \cdots - X^{q^{2n-2}})^{b_{n-1}}
\end{equation}
and therefore $\sum_{k=0}^{2n-2} c_k = b_{n-1} \leq \delta -1$. As a consequence, we get terms whose exponents (of the corresponding monomials) are
\begin{equation}
\label{Z}
\sum_{k=0}^{n-1} (a_k + c_k ) q^k + \sum_{k=0}^{n-2} (b_k + c_{k+n}) q^{k+n}.
\end{equation}

Consider first the case when $q$ is odd. Then, for having a term whose monomial is $X^{q^{2n-1}-1}$,  every coefficient in the $q$-adic expansion of (\ref{Z}) shall be equal to $q-1$. As $b_k$ and $c_k$ are lower than $\delta = (q+1)/2$, it holds that $b_k + c_{k+n} \leq q-1$. However, $b_{n-2} + c_{2n-2}$ is the coefficient of $q^{2n-2}$ and it equals $q-1$ only when $b_{n-1} = (q-1)/2$ and uniquely for one monomial obtained from (\ref{T}), but in this case $c_{2n-3}=0$, and thus not all coefficients in (\ref{Z}) are equal to $q-1$.

Finally, when $q$ is even, $\delta =(q+2)/2 = q/2 + 1$ and then the sums $a_k + c_k$, $0 \leq k \leq n-1$ and $b_k + c_{k+n}$, $0 \leq k \leq n-2$, may reach the values $q-1$ or $q$. However, this is not the case for $a_0 + c_0$ because $c_0$ is either $0$ or $1$ depending on either $b_{n-1} >1$ or $b_{n-1} =1$. When either $a_k + c_k$, for $0 \leq k \leq n-1$, or $b_k + c_{k+n}$, for $0 \leq k \leq n-2$, is equal to $q$, the $q$-adic expansion of (\ref{Z}) is obtained by adding one unit to the next power of $q$, and when $b_{n-2} + c_{2n-2} =q$, again one must use the fact that $X^{q^{2n-1}} = -X-X^q - \cdots - X^{q^{2n-2}}$. Taking into account that the power $(X^{q^{2n-1}})^i$ with $i=1$ can appear only once, we deduce that the $q$-adic expansion $\sum_{k=0}^{2n-2} d_k q^k$ of the expression (\ref{Z}) satisfies $d_k < (\delta -1)+1 = (q+2)/2 < q-1$ and not every coefficient of the mentioned $q$-adic expansion is equal to $q-1$.
\end{proof}

We conclude this section with a result which gives the parameters of the quantum codes constructed from Hermitian duals of certain codes $E_{\Delta, \mathrm{tr}_{2r}^s}$. These codes are MDS quantum codes and they were also found in \cite{MDS1,Sarvepalli}.

\begin{teo}
\label{19}
Let $p$ be a prime number, $r$ and $s$ positive integers such that $r=s \cdot n$, $n \geq 1$ and set $q= p^{s}$. Let $t$ be a nonnegative integer such that
\[
t <q^n-\lfloor\frac{(q-1)}{2}\rfloor q^{n-1}-\cdots-\lfloor\frac{(q-1)}{2}\rfloor q-1
\]
and write $\Delta(t) = \{ a \in \mathbb{Z} ~|~ 0 \leq a \leq t \}$.
Then, the following inclusion holds:
\[
E_{\Delta(t), \mathrm{tr}_{2r}^s} \subseteq \left( E_{\Delta(t), \mathrm{tr}_{2r}^s}\right)^{\perp_h}.
\]

As a consequence, we are able to construct a stabilizer (quantum) MDS code with parameters $[[N,N -2t -2,  t+2]]_{q^n}$.
\end{teo}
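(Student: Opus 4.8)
The plan is to verify the Hermitian self-orthogonality inclusion first and then read off the parameters. For the inclusion, recall that $E_{\Delta(t),\mathrm{tr}_{2r}^s}$ is spanned by the evaluation vectors $\mathrm{ev}_{\mathrm{tr}_{2r}^s}(X^i)$ with $0\le i\le t$, so it suffices to check that
\[
\mathrm{ev}_{\mathrm{tr}_{2r}^s}(X^i)\cdot_h \mathrm{ev}_{\mathrm{tr}_{2r}^s}(X^j)=0
\]
for all $0\le i,j\le t$. The key observation is that the Hermitian product of two such vectors can be rewritten as a Euclidean product against $\mathrm{ev}_{\mathrm{tr}_{2r}^s}(X^0)$: concretely,
\[
\mathrm{ev}_{\mathrm{tr}_{2r}^s}(X^i)\cdot_h \mathrm{ev}_{\mathrm{tr}_{2r}^s}(X^j)=\sum_{\ell=1}^N \alpha_\ell^i\,\alpha_\ell^{jq^n}=\sum_{\ell=1}^N \alpha_\ell^{\,i+jq^n}=\mathrm{ev}_{\mathrm{tr}_{2r}^s}(X^{i+jq^n})\cdot \mathrm{ev}_{\mathrm{tr}_{2r}^s}(X^0).
\]
Here I am using that $b_\ell^{q^n}=(\alpha_\ell^j)^{q^n}=\alpha_\ell^{jq^n}$ because the entries live in $\mathbb{F}_{q^{2n}}$ and $x\mapsto x^{q^n}$ is the relevant Frobenius. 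So the whole inclusion reduces to showing that each class $X^{i+jq^n}+\langle \mathrm{tr}_{2r}^s(X)\rangle$, reduced to a representative of degree less than $q^{2n-1}$, fails to contain the monomial $X^{q^{2n-1}-1}$, by Proposition \ref{17}.

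Now I would split according to whether $j=0$ or $j\neq 0$. If $j=0$ then the class is $X^i$ with $0\le i\le t<q^{2n-1}-1$; this representative is already reduced (its degree is below $q^{2n-1}-1<q^{2n-1}$), and it is a single monomial $X^i$ with $i<q^{2n-1}-1$, hence it is not $X^{q^{2n-1}-1}$, so Proposition \ref{16} (or \ref{17}) gives orthogonality. If $j\neq 0$, I apply Proposition \ref{18} with exponent $i+jq^m$ for $m=n$: the hypothesis $0\le i,j<q^n-\lfloor\frac{q-1}{2}\rfloor q^{n-1}-\cdots-\lfloor\frac{q-1}{2}\rfloor q-1$, not both zero, is exactly the bound imposed on $t$, and $0<n\le n$ is allowed. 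Proposition \ref{18} then says the reduced representative of $X^{i+jq^n}+\langle \mathrm{tr}_{2r}^s(X)\rangle$ does not contain $X^{q^{2n-1}-1}$, and Proposition \ref{17} delivers $\mathrm{ev}_{\mathrm{tr}_{2r}^s}(X^i)\cdot_h \mathrm{ev}_{\mathrm{tr}_{2r}^s}(X^j)=0$. This establishes $E_{\Delta(t),\mathrm{tr}_{2r}^s}\subseteq (E_{\Delta(t),\mathrm{tr}_{2r}^s})^{\perp_h}$.

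For the parameters, the length is $N=q^{2n-1}$ by construction. Since $t<q^{2n-1}-1$, in particular $\Delta(t)\subseteq\{0,1,\ldots,q^{2n-1}-1\}$, so Proposition \ref{13} gives $\dim E_{\Delta(t),\mathrm{tr}_{2r}^s}=|\Delta(t)|=t+1=:k$. Applying Theorem \ref{el1} over the field $\mathbb{F}_{q^{2n}}=\mathbb{F}_{(q^n)^2}$ (with $q^n$ in the role of the ``$q$'' there, so Hermitian duality is the one we used) produces a stabilizer code with parameters $[[N,\,2k-N,\,\ge d]]_{q^n}=[[N,\,N-2t-2,\,\ge d]]_{q^n}$, where $d$ is the minimum distance of $E_{\Delta(t),\mathrm{tr}_{2r}^s}$. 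Finally, since $E_{\Delta(t),\mathrm{tr}_{2r}^s}$ is an evaluation code of the monomials $1,X,\ldots,X^t$ at $N$ distinct points, any nonzero codeword comes from a nonzero polynomial of degree at most $t$, which has at most $t$ roots among the $\alpha_\ell$, hence weight at least $N-t$; thus $d\ge N-t=N-(k+d'')$... more precisely $d\ge N-t$, and one checks $N-t=(N-2t-2)+(t+2)$, i.e. the quantum Singleton bound $k+2d=N+2$ is met with $d=t+2$, making the code MDS. The main obstacle in the write-up is simply correctly invoking Proposition \ref{18} with $m=n$ and noticing that the excluded case $i=j=0$ is harmless (it gives the all-ones-type vector paired with itself, handled by the $j=0$ branch anyway); everything else is bookkeeping.
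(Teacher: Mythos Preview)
Your argument for the Hermitian self-orthogonality inclusion is correct and matches the paper's proof: both rewrite the Hermitian pairing as the Euclidean product $\mathrm{ev}_{\mathrm{tr}_{2r}^s}(X^{i+jq^n})\cdot\mathrm{ev}_{\mathrm{tr}_{2r}^s}(X^0)$ and then invoke Propositions~\ref{17} and~\ref{18} with $m=n$. Your extra care with the case $j=0$ is fine.

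The gap is in the parameter computation. In Theorem~\ref{el1} the code $C$ must satisfy $C^{\perp_h}\subseteq C$, and the quantum distance is bounded below by $d(C)$. What you proved is $E_{\Delta(t),\mathrm{tr}_{2r}^s}\subseteq\bigl(E_{\Delta(t),\mathrm{tr}_{2r}^s}\bigr)^{\perp_h}$, so the correct choice is $C=\bigl(E_{\Delta(t),\mathrm{tr}_{2r}^s}\bigr)^{\perp_h}$, which has dimension $N-(t+1)$; this is why $2\dim C-N=N-2t-2$ (with your $k=t+1$ one gets $2k-N=2t+2-N<0$, not $N-2t-2$). More importantly, the relevant distance is that of the \emph{Hermitian dual}, not of $E_{\Delta(t),\mathrm{tr}_{2r}^s}$ itself. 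Your computation $d(E_{\Delta(t),\mathrm{tr}_{2r}^s})\ge N-t$ is correct but bounds the wrong code, and the closing identity $N-t=(N-2t-2)+(t+2)$ is a tautology that does not establish $d_{\mathrm{quantum}}\ge t+2$.

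There are two easy repairs. One is to note that your bound actually shows $E_{\Delta(t),\mathrm{tr}_{2r}^s}$ is a classical MDS $[N,t+1,N-t]$ code, and since the dual of an MDS code is MDS (and Hermitian and Euclidean duals are isometric here), the Hermitian dual has parameters $[N,N-t-1,t+2]$; then Theorem~\ref{el1} and the quantum Singleton bound pin down $d=t+2$. The paper takes the other route: using Proposition~\ref{16} it observes that the Euclidean dual $\bigl(E_{\Delta(t),\mathrm{tr}_{2r}^s}\bigr)^{\perp}$ contains the evaluations of $X^0,\ldots,X^{N-t-2}$, and a Vandermonde argument gives distance at least $t+2$ directly, with Singleton forcing equality.
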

\begin{proof}
Propositions  \ref{17} and \ref{18} for $m=n$ show that
\[
\mathrm{ev}_{\mathrm{tr}_{2r}^s} (X^i) \cdot_h \mathrm{ev}_{\mathrm{tr}_{2r}^s} (X^j) = \mathrm{ev}_{\mathrm{tr}_{2r}^s} (X^{i+ j q^n}) \cdot \mathrm{ev}_{\mathrm{tr}_{2r}^s} (X^0) =0,
\]
where the monomials $X^i$ and $X^j$ are representatives of  classes in $\mathbb{F}_{q^{2n}}[X] /\langle \mathrm{tr}_{2r}^s (X) \rangle$ and $i, j \in \Delta(t)$.
This proves the codes' inclusion. The dimension of the  stabilizer code is clear from Proposition \ref{13} and Theorem \ref{el1}. Finally, we use Theorem \ref{el1} again for bounding the distance of the stabilizer code. Indeed, by Proposition \ref{16} the code $\left( E_{\Delta(t), \mathrm{tr}_{2r}^s}\right)^{\perp}$ contains the image by $\mathrm{ev}_{\mathrm{tr}_{2r}^s}$ of consecutive monomials $X^j$, $0 \leq   (N-1)-(t+1)$,  because $E_{\Delta(t), \mathrm{tr}_{2r}^s}$ is the code generated by $\mathrm{ev}_{\mathrm{tr}_{2r}^s}(X^i)$, $0 \leq i \leq t$. Thus, the minimum distance of the code is at least $t+2$ but it cannot be larger than the Singleton bound. This concludes the proof after noticing that Hermitian and Euclidean dual codes are isometric, which can be deduced from the fact that, in our case, the Euclidean dual of a code coincides with  the $q^n$th power of its Hermitian  dual.
\end{proof}

\section{Subfield-subcodes of evaluation codes}\label{sect:subfield}

In this section, we will review and adapt to our notation known results on subfield subcodes of evaluation codes. We will follow the approach in \cite{galindo-hernando, gal-her-rua, QINP, QINP2} to obtain subfield-subcodes, namely, we will consider subfield subcodes of one-variable $J$-affine variety codes with $J=\emptyset$. We refer the reader to these references for proofs and further details.

We recall that $p$ is a prime number and $r$ and $s$ are positive integers such that $s|r$. Let $N^T= p^{2r}$ and consider the map $\mathrm{ev}^T: \mathbb{F}_{p^{2r}}[X]/\langle X^{N^T} - X \rangle \rightarrow \mathbb{F}_{p^{2r}}^{N^T}$ defined by $\mathrm{ev}^T(f) = (f(\alpha_1), f(\alpha_2), \ldots, f(\alpha_{N^T}))$, where $Z^T = \{\alpha_1, \alpha_2, \ldots, \alpha_{N^T}\}$ is the set of zeros of the polynomial $X^{N^T} - X $ in $\mathbb{F}_{p^{2r}}$. Note that $Z \subset  Z^T$ by Section \ref{sect1}. Let $\Delta \subseteq \{0, 1 , \ldots , N^T -1 \}$, we define the evaluation code $E^T_\Delta \subseteq \mathbb{F}_{p^{2r}}$ as the linear space generated by the vectors $\{ \mathrm{ev}^T (X^a) \mid a \in \Delta \}$. For  $\Delta = \{ 0, 1, \ldots, k-1 \}$ we have a Reed-Solomon code with length $p^{2r}$ and dimension $k$. In general, the dimension of $E^T_\Delta$ is equal to the cardinality of the set $\Delta$.

Let $\mathcal{H}^T = \{0\} \cup \{ 1, 2, \ldots, N^T -1 \}$, where $\{ 1, 2, \ldots, N^T -1 \}$ is regarded as a set of representatives of the congruence  ring $\mathbb{Z}_{N^T -1} = \mathbb{Z}/(N^T -1) \mathbb{Z}$, and consider cyclotomic cosets with respect to $p^{2s}$ defined as subsets $\mathfrak{I} \subseteq \mathcal{H}^T$ such that $p^{2s} a \in \mathfrak{I}$ for all $a \in \mathfrak{I}$. A cyclotomic coset $\mathfrak{I}$ as above is said to be {\it minimal} whenever its elements are those that can be expressed as $a p^{(2s)i}$, for some nonnegative integer $i$ and some fixed element $a \in \mathfrak{I}$. We represent each minimal cyclotomic coset $\mathfrak{I}$ by that element $a$ in $\mathcal{H}^{T}$ which is the minimum in $\mathfrak{I}$ and then we write $\mathfrak{I}= \mathfrak{I}_a$. This set of representatives will be denoted by $\mathcal{A}$ and so $\{\mathfrak{I}_a\}_{a \in \mathcal{A}}$ is the family of minimal cyclotomic cosets in $\mathcal{H}^T$.

Next, we consider a different trace map, $\mathrm{tr}_{2r}^{2s}: \mathbb{F}_{p^{2r}} \rightarrow \mathbb{F}_{p^{2s}}$, defined as
\[
\mathrm{tr}_{2r}^{2s}(x) = x + x^{p^{2s}}+ \cdots + x^{p^{2s(\frac{r}{s}-1)}},
\]
%which componentwise determines $\mathbf{tr}_{2r}^{2s}: (\mathbb{F}_{p^{2r}})^{N^T} \rightarrow (\mathbb{F}_{p^{2s}})^{N^T}$.
and let
\[
\mathcal{T}: \mathbb{F}_{p^{2r}}[X]/\langle X^{N^T} - X \rangle \rightarrow \mathbb{F}_{p^{2r}}[X]/\langle X^{N^T} - X \rangle,
\]
given by $\mathcal{T}(f) = f + f^{p^{2s}}+ \cdots + f^{p^{2s(\frac{r}{s}-1)}}$. This last map satisfies the following result whose proof is identical to that of \cite[Proposition 5]{galindo-hernando}.
\begin{pro}
\label{5DCC}
Let $f$ be an element in $\mathbb{F}_{p^{2r}}[X]/\langle X^{N^T} - X \rangle$. Then, the following conditions are equivalent:
\begin{enumerate}
\item $f = \mathcal{T}(h)$ for some $h \in \mathbb{F}_{p^{2r}}[X]/\langle X^{N^T} - X \rangle$.
    \item $f^{p^{2s}} = f$.
    \item $f$ evaluates to $\mathbb{F}_{p^{2s}}$, that is $\mathrm{ev}^T (f) \in (\mathbb{F}_{p^{2s}})^{N^T}$.
\end{enumerate}
\end{pro}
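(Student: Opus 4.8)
The plan is to prove Proposition \ref{5DCC} by a standard argument about the Frobenius-type operator $\mathcal{T}$ on the ring $R := \mathbb{F}_{p^{2r}}[X]/\langle X^{N^T}-X\rangle$, mirroring the proof of \cite[Proposition 5]{galindo-hernando}. The key observation is that $R$ is isomorphic, via the evaluation map $\mathrm{ev}^T$, to the product ring $\mathbb{F}_{p^{2r}}^{N^T}$ (one copy of $\mathbb{F}_{p^{2r}}$ for each point $\alpha_i \in Z^T$), because $X^{N^T}-X = \prod_{\alpha \in \mathbb{F}_{p^{2r}}}(X-\alpha)$ is squarefree and splits completely, so the Chinese Remainder Theorem applies. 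Under this identification, the map $\mathcal{T}$ acts coordinatewise as the field trace $\mathrm{tr}_{2r}^{2s}\colon \mathbb{F}_{p^{2r}} \to \mathbb{F}_{p^{2s}}$, and the $p^{2s}$-power map $f \mapsto f^{p^{2s}}$ acts coordinatewise as the Frobenius automorphism $x \mapsto x^{p^{2s}}$ of $\mathbb{F}_{p^{2r}}$ over $\mathbb{F}_{p^{2s}}$. Once this dictionary is in place, the three conditions reduce to three classical equivalent characterizations of $\mathbb{F}_{p^{2s}}$ as a subfield of $\mathbb{F}_{p^{2r}}$.

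Concretely, I would proceed as follows. First, I would record that $\mathrm{ev}^T$ is a ring isomorphism $R \cong \mathbb{F}_{p^{2r}}^{N^T}$ and that it intertwines $\mathcal{T}$ with the coordinatewise trace and $(\cdot)^{p^{2s}}$ with coordinatewise Frobenius; this is immediate since $\mathrm{ev}^T$ is a ring homomorphism and $\mathbb{F}_{p^{2r}}$ has characteristic $p$, so $(f+g)^{p^{2s}} = f^{p^{2s}} + g^{p^{2s}}$. Next, for the equivalence $(2) \Leftrightarrow (3)$: $f^{p^{2s}} = f$ in $R$ means $f(\alpha_i)^{p^{2s}} = f(\alpha_i)$ for every $i$, and an element of $\mathbb{F}_{p^{2r}}$ is fixed by $x \mapsto x^{p^{2s}}$ if and only if it lies in $\mathbb{F}_{p^{2s}}$ (since $2s \mid 2r$), which is exactly the statement $\mathrm{ev}^T(f) \in (\mathbb{F}_{p^{2s}})^{N^T}$. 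For $(1) \Rightarrow (2)$: if $f = \mathcal{T}(h)$, then applying the $p^{2s}$-power map permutes the summands $h, h^{p^{2s}}, \ldots, h^{p^{2s(r/s-1)}}$ cyclically — using that $h^{p^{2s \cdot (r/s)}} = h^{p^{2r}} = h$ in $R$, which holds because $x^{p^{2r}} = x$ for all $x \in \mathbb{F}_{p^{2r}}$ — hence $\mathcal{T}(h)^{p^{2s}} = \mathcal{T}(h) = f$. Finally, for $(2) \Rightarrow (1)$ (or equivalently $(3) \Rightarrow (1)$): working coordinatewise, for each $i$ the value $f(\alpha_i) \in \mathbb{F}_{p^{2s}}$, and surjectivity of the field trace $\mathrm{tr}_{2r}^{2s}\colon \mathbb{F}_{p^{2r}} \to \mathbb{F}_{p^{2s}}$ lets us choose $\beta_i \in \mathbb{F}_{p^{2r}}$ with $\mathrm{tr}_{2r}^{2s}(\beta_i) = f(\alpha_i)$; the unique $h \in R$ with $h(\alpha_i) = \beta_i$ for all $i$ (existence and uniqueness by the CRT isomorphism) then satisfies $\mathcal{T}(h) = f$.

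I expect the proof to be essentially routine given the reference; there is no serious obstacle. The only point requiring a little care is making the translation between polynomial identities in $R$ and coordinatewise identities in $\mathbb{F}_{p^{2r}}^{N^T}$ fully rigorous — in particular, noting that $f \mapsto f^{p^{2s}}$ is well-defined on $R$ (it respects the relation $X^{N^T} = X$ because $\alpha^{p^{2s}}$ is again a root of $X^{N^T}-X$) and that $\mathcal{T}$ lands in $R$ — but this is exactly where the analogy with \cite[Proposition 5]{galindo-hernando} is tightest, so I would simply cite that proof for the details and indicate the above dictionary as the mechanism. As the statement already says, the proof is identical to that of \cite[Proposition 5]{galindo-hernando}, so the write-up can be kept brief.
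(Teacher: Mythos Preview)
Your proposal is correct and matches the paper's approach: the paper does not spell out a proof but simply states that it is identical to that of \cite[Proposition 5]{galindo-hernando}, and the argument you outline (CRT identification $R\cong\mathbb{F}_{p^{2r}}^{N^T}$ via $\mathrm{ev}^T$, reducing $\mathcal{T}$ and the $p^{2s}$-power map to the coordinatewise field trace and Frobenius, then invoking the standard equivalences for $\mathbb{F}_{p^{2s}}\subset\mathbb{F}_{p^{2r}}$) is exactly that referenced proof adapted to the present notation.
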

The above result shows that one can get codes of length $N^T$ over $\mathbb{F}_{p^{2s}}$ from the images $\mathrm{ev}^T(\mathcal{T}(h))$ of  classes of polynomials $h \in \mathbb{F}_{p^{2r}}[X]$.

Next we provide a result very close to \cite[Theorem 3]{galindo-hernando}, whose proof is analogous, which determines a basis of the vector space (over $\mathbb{F}_{p^{2s}}$) of polynomials in $\mathbb{F}_{p^{2r}}[X]/\langle X^{N^T} - X \rangle$ evaluating to $\mathbb{F}_{p^{2s}}$. In order to state such a result,  we need the following notation: $i_a$ denotes the cardinality of the minimal cyclotomic coset $\mathfrak{I}_a$ and, since $2s i_a$ divides $2r$,  the mapping for polynomials $f$ with support on a cyclotomic coset $\mathfrak{I}_a$
\[
\mathcal{T}_a(f) = f + f^{p^{2s}}+ \cdots + f^{p^{2s(i_a-1)}},
\]
is well defined.

\begin{pro}
\label{23}
With the above notation, it holds that the set
\[
\bigcup_{a \in \mathcal{A}} \left\{ \mathcal{T}_a \left( \beta^l X^a \right) \; \bigl\vert \; 0 \leq l \leq i_a -1 \mbox{ and $\beta$ is a primitive element of $\mathbb{F}_{p^{2s i_a}}$ } \right\}
\]
is a basis of the vector space (over $\mathbb{F}_{p^{2s}}$) of elements in $\mathbb{F}_{p^{2r}}[X]/\langle X^{N^T} - X \rangle$ evaluating to $\mathbb{F}_{p^{2s}}$
\end{pro}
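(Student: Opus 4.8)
The plan is to decompose the ambient ring along minimal cyclotomic cosets, to describe explicitly the subspace of each summand that evaluates to $\mathbb{F}_{p^{2s}}$, and to recognise the claimed generators as the image of a monomial basis of a subfield under an injective linear map.

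First I would fix the monomial basis $\{X^a \mid a \in \mathcal{H}^T\}$ of $\mathbb{F}_{p^{2r}}[X]/\langle X^{N^T}-X\rangle$ and, for each $a \in \mathcal{A}$, set $W_a = \mathrm{span}_{\mathbb{F}_{p^{2r}}}\{X^b \mid b \in \mathfrak{I}_a\}$, the span of the $i_a$ pairwise distinct monomials $X^{a},X^{ap^{2s}},\ldots,X^{ap^{2s(i_a-1)}}$ (exponents read modulo $N^T-1$). Since the minimal cyclotomic cosets $\mathfrak{I}_a$ partition $\mathcal{H}^T$, we get $\mathbb{F}_{p^{2r}}[X]/\langle X^{N^T}-X\rangle = \bigoplus_{a\in\mathcal{A}} W_a$, which is simultaneously a direct sum of $\mathbb{F}_{p^{2s}}$-subspaces. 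The Frobenius map $\phi\colon g \mapsto g^{p^{2s}}$ sends $cX^b$ to $c^{p^{2s}}X^{bp^{2s}}$, hence preserves each $W_a$. By Proposition \ref{5DCC} the space $V$ of elements of $\mathbb{F}_{p^{2r}}[X]/\langle X^{N^T}-X\rangle$ evaluating to $\mathbb{F}_{p^{2s}}$ equals $\{f \mid \phi(f)=f\}$; comparing components in the direct sum, $f=\sum_a f_a$ lies in $V$ iff $\phi(f_a)=f_a$ for every $a$, so $V = \bigoplus_{a\in\mathcal{A}} V_a$ with $V_a = \{f\in W_a \mid \phi(f)=f\}$. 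It therefore suffices to prove, for each $a$, that $\{\mathcal{T}_a(\beta^l X^a) \mid 0\le l\le i_a-1\}$ is an $\mathbb{F}_{p^{2s}}$-basis of $V_a$, where $\beta$ is a primitive element of $\mathbb{F}_{p^{2si_a}}$.

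Next I would compute $V_a$ in coordinates. Writing $f=\sum_{j=0}^{i_a-1} c_j X^{ap^{2sj}}\in W_a$ with $c_j\in\mathbb{F}_{p^{2r}}$, one has $\phi(f)=\sum_{j=0}^{i_a-1} c_{j-1}^{p^{2s}} X^{ap^{2sj}}$ (indices modulo $i_a$, using $ap^{2si_a}\equiv a$), so $\phi(f)=f$ is equivalent to $c_j=c_{j-1}^{p^{2s}}$ for all $j$, that is $c_j=c_0^{p^{2sj}}$ subject to the cyclic constraint $c_0=c_0^{p^{2si_a}}$, i.e. $c_0\in\mathbb{F}_{p^{2si_a}}$ (a subfield of $\mathbb{F}_{p^{2r}}$ because $2si_a\mid 2r$). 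Since $\mathcal{T}_a(cX^a)=\sum_{j=0}^{i_a-1} c^{p^{2sj}} X^{ap^{2sj}}$, this says exactly $V_a=\{\mathcal{T}_a(cX^a)\mid c\in\mathbb{F}_{p^{2si_a}}\}$, and the map $\Phi_a\colon\mathbb{F}_{p^{2si_a}}\to W_a$, $c\mapsto\mathcal{T}_a(cX^a)$, is $\mathbb{F}_{p^{2s}}$-linear (each scalar $\lambda\in\mathbb{F}_{p^{2s}}$ is fixed by every power of $\phi$) with image $V_a$. It is injective because the exponents $ap^{2sj}$ are pairwise distinct modulo $N^T-1$, so reading off the coefficient of $X^a$ recovers $c$.

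Finally, a primitive element $\beta$ of $\mathbb{F}_{p^{2si_a}}$ lies in no proper subfield, so $\mathbb{F}_{p^{2si_a}}=\mathbb{F}_{p^{2s}}(\beta)$ and $\{1,\beta,\ldots,\beta^{i_a-1}\}$ is an $\mathbb{F}_{p^{2s}}$-basis of $\mathbb{F}_{p^{2si_a}}$ (as $[\mathbb{F}_{p^{2si_a}}:\mathbb{F}_{p^{2s}}]=i_a$); applying the injective $\mathbb{F}_{p^{2s}}$-linear map $\Phi_a$ turns this into the $\mathbb{F}_{p^{2s}}$-basis $\{\mathcal{T}_a(\beta^l X^a) \mid 0\le l\le i_a-1\}$ of $V_a$. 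Taking the union over $a\in\mathcal{A}$ and invoking $V=\bigoplus_{a\in\mathcal{A}}V_a$ yields the claim. I expect the only delicate point to be the coordinate bookkeeping modulo $i_a$: one must use that $\mathfrak{I}_a$ is a \emph{minimal} coset of cardinality exactly $i_a$ (so the monomials are genuinely distinct and the Frobenius shift is an honest cyclic permutation of the coordinates), and that the cyclic constraint pins $c_0$ down to $\mathbb{F}_{p^{2si_a}}$ --- neither a smaller nor a larger field --- which is exactly where the divisibility $2si_a\mid 2r$ is needed.
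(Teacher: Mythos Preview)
Your argument is correct and complete. The paper does not actually prove this proposition; it merely states that the result is ``very close to \cite[Theorem 3]{galindo-hernando}, whose proof is analogous,'' so there is nothing to compare against in the present paper. Your approach---decomposing the quotient ring along minimal cyclotomic cosets, identifying the Frobenius-fixed subspace $V_a$ of each $W_a$ as $\{\mathcal{T}_a(cX^a)\mid c\in\mathbb{F}_{p^{2si_a}}\}$, and then transporting the power basis $\{1,\beta,\ldots,\beta^{i_a-1}\}$ through the injective $\mathbb{F}_{p^{2s}}$-linear map $\Phi_a$---is exactly the standard method for results of this type and is almost certainly what the referenced proof does as well. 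The only cosmetic point is that the case $a=0$ (where $i_0=1$ and the exponent-reduction modulo $N^T-1$ does not apply) deserves a one-line separate check, which is trivial.
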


Let $E^{T,\sigma}_\Delta$ be the subfield subcode of $E^T_\Delta$ over $\mathbb{F}_{p^{2s}}$, i.e.  $E^{T,\sigma}_\Delta = E^{T,\sigma}_\Delta \cap \mathbb{F}_{p^{2s}}$. By \cite[Theorem 4]{galindo-hernando}, the dimension of $E^{T,\sigma}_\Delta$ is equal to $$ \sum_{\mathbf{a} \in \mathcal{A}| \mathfrak{I}_\mathbf{a} \subseteq \Delta} i_\mathbf{a}.$$

Let $C^{T,\sigma}_\Delta$ be the Euclidean dual code of $E^{T,\sigma}_\Delta$,  and $\mathcal{A}= \{ a_0=0 <a_1<a_2 \cdots <a_z\}$, for $t \leq z$. For $\Delta^\sigma(t) = \mathfrak{I}_{a_0} \cup \mathfrak{I}_{a_1} \cup \cdots \cup \mathfrak{I}_{a_t}$, the minimum distance of $C^{T,\sigma}_{\Delta^\sigma(t)}$ is greater than or equal to $a_{t+1}+1$ (BCH bound).

\begin{exa}\label{ex:t1}
Let $p=2$, $s=1$ and $r=4$. Hence, we will consider codes over $\mathbb{F}_{2^8}$ and subfield-subcodes over $\mathbb{F}_{2^2}$ with length $N^T=256$. The first eight minimal cyclotomic cosets are $I_0 = \{0\}$, $I_1 = \{1,4,16,64\}$, $I_2=\{2,8,32,128\}$, $I_3=\{3,12,48,142\}$, $I_5=\{5,20,65,80\}$, $I_6 = \{6,24,12, 129\}$, $I_7=\{7,28,112,193\}$ and $I_9= \{9,36,66,144\}$. Hence we have that $a_0 = 0$, $a_1=1$, $a_2=2$, $a_3=3$, $a_4 = 5$, $a_5 = 6$, $a_6 =7$, $a_7=9$.

Consider $\Delta^\sigma(6) = \mathfrak{I}_{a_0} \cup \mathfrak{I}_{a_1} \cup \cdots \cup \mathfrak{I}_{a_6}$ . The code  $C^{T,\sigma}_{\Delta^\sigma(6)}$ has parameters $$\left[N^T,N^T- \sum_{l=0}^6 i_{a_l},a_7+1\right]_4=[256, 256-25, \ge 10]_4 = [256, 231, \ge 10]_4.$$
\end{exa}

\section{Stabilizer codes obtained from subfield-subcodes of Evaluation Codes at the Trace Roots}
\label{sect2}

The aim of this section is to study subfield-subcodes over $\mathbb{F}_{p^{2s}}$ of the codes introduced in Section \ref{sect1} and determine the parameters for their attached stabilizer quantum codes over $\mathbb{F}_{p^s}$. Keep the notation as in that section.

%Our codes are defined by the map $\mathrm{ev}_{\mathrm{tr}_{2r}^s}$; to study them we recall some known facts that hold when one evaluates in more points than we have used above.

%As mentioned, we would like to compute subfield-subcodes of our codes for obtaining stabilizer quantum codes. We start by defining the classical codes that support this fact.

\begin{de}
Let $\emptyset \neq \Delta \subseteq \mathcal{H}$, the subfield-subcode over $\mathbb{F}_{p^{2s}}$ of the code $E_{\Delta, \mathrm{tr}_{2r}^s}$ is defined as
\[
E_{\Delta, \mathrm{tr}_{2r}^s}^\sigma := E_{\Delta, \mathrm{tr}_{2r}^s} \cap \mathbb{F}_{p^{2s}}^N.
\]
\end{de}

%\begin{rem}
%{\rm
%}
%\end{rem}

Proposition \ref{5DCC} and the paragraph before Proposition \ref{23} prove that the map $\mathrm{ev}_{\mathrm{tr}_{2r}^s}$ applied to classes of polynomials $\mathcal{T} (f)$ (and $\mathcal{T}_a (f)$ )  that evaluate  to  $\mathbb{F}_{p^{2s}}^N$, where $N=q^{2n-1} = p^{2r-s}$. Moreover, considering suitable sets $\Delta$, we can bound their parameters. Let $\mathcal{A}= \{ a_0=0 <a_1<a_2 \cdots <a_z\}$  and, for $t \leq z$, let \[
\Delta^\sigma (t) := \mathfrak{I}_{a_0} \cup \mathfrak{I}_{a_1} \cup \cdots \cup \mathfrak{I}_{a_t}.
\]
Then,
\begin{teo}
\label{25}
%Let $E_{\Delta^\sigma (t)}$ be the code generated by $\{\mathrm{ev}^T (X^a) | a \in \Delta^\sigma (t)\}$ and $E_{\Delta^\sigma (t)}^\sigma = E_{\Delta^\sigma (t)} \cap \mathbb{F}_{p^{s}}^N$ a subfield-subcode.

The dimension of $E_{\Delta^\sigma (t), \mathrm{tr}_{2r}^s}^\sigma$ and the minimum distance of its Hermitian dual code satisfy the following bounds:
\[
\dim \left( E_{\Delta^\sigma (t), \mathrm{tr}_{2r}^s}^\sigma \right) \leq \sum_{l=0}^t i_{a_l},
\]
\[
 d \left( E_{\Delta^\sigma (t), \mathrm{tr}_{2r}^s}^\sigma \right)^{\perp_h} \geq a_{t+1} +1
\]
\end{teo}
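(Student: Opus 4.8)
The plan is to reduce both bounds to the corresponding facts about the full-length codes $E^T_{\Delta}$, $E^{T,\sigma}_{\Delta}$ and $C^{T,\sigma}_{\Delta}$ of Section \ref{sect:subfield}, by means of the elementary observation that $\mathrm{ev}_{\mathrm{tr}_{2r}^s}$ equals $\mathrm{ev}^T$ followed by the projection $\pi \colon \mathbb{F}_{p^{2r}}^{N^T} \to \mathbb{F}_{p^{2r}}^{N}$ onto the coordinates indexed by $Z \subseteq Z^T$. In other words $E_{\Delta, \mathrm{tr}_{2r}^s}$ is the code obtained from $E^T_{\Delta}$ by puncturing at $I := Z^T \setminus Z$, and, since $\pi$ sends $\mathbb{F}_{p^{2s}}$-vectors to $\mathbb{F}_{p^{2s}}$-vectors, it carries $E^{T,\sigma}_{\Delta}$ into $E_{\Delta, \mathrm{tr}_{2r}^s}^\sigma$.

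For the dimension bound I would use that a subfield-subcode has $\mathbb{F}_{p^{2s}}$-dimension at most the $\mathbb{F}_{p^{2r}}$-dimension of the ambient code: an $\mathbb{F}_{p^{2s}}$-basis of $E_{\Delta^\sigma (t), \mathrm{tr}_{2r}^s}^\sigma$ consists of vectors of $\mathbb{F}_{p^{2s}}^N \subseteq \mathbb{F}_{p^{2r}}^N$ which are $\mathbb{F}_{p^{2r}}$-linearly independent and lie in $E_{\Delta^\sigma (t), \mathrm{tr}_{2r}^s}$. As $E_{\Delta^\sigma (t), \mathrm{tr}_{2r}^s}$ is spanned by the $|\Delta^\sigma (t)|$ vectors $\mathrm{ev}_{\mathrm{tr}_{2r}^s}(X^a)$, $a \in \Delta^\sigma (t)$, and distinct minimal cyclotomic cosets are pairwise disjoint, this yields
\[
\dim\big(E_{\Delta^\sigma (t), \mathrm{tr}_{2r}^s}^\sigma\big) \leq \dim_{\mathbb{F}_{p^{2r}}}\big(E_{\Delta^\sigma (t), \mathrm{tr}_{2r}^s}\big) \leq |\Delta^\sigma (t)| = \sum_{l=0}^{t} i_{a_l};
\]
the relations in $\mathbb{F}_{p^{2r}}[X]/\langle \mathrm{tr}_{2r}^s(X)\rangle$ are what may make the middle inequality strict, but they are harmless here.

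For the minimum distance of the Hermitian dual I would first pass to the Euclidean dual. Over $\mathbb{F}_{p^{2s}}$ one has $\big(E_{\Delta^\sigma (t), \mathrm{tr}_{2r}^s}^\sigma\big)^{\perp_h} = \big((E_{\Delta^\sigma (t), \mathrm{tr}_{2r}^s}^\sigma)^{\perp}\big)^{p^s}$, the coordinatewise $p^s$-th power, which is a weight-preserving bijection (as in the last lines of the proof of Theorem \ref{19}); hence it suffices to bound $d\big((E_{\Delta^\sigma (t), \mathrm{tr}_{2r}^s}^\sigma)^{\perp}\big)$ from below by $a_{t+1}+1$. Now $E_{\Delta^\sigma (t), \mathrm{tr}_{2r}^s}^\sigma$ contains $\pi\big(E^{T,\sigma}_{\Delta^\sigma (t)}\big)$, i.e. the puncturing of $E^{T,\sigma}_{\Delta^\sigma (t)}$ at $I$; dualizing and using that the Euclidean dual of a punctured code is the shortening of the dual, that shortening does not decrease the minimum distance, and that a subcode has minimum distance at least that of the code containing it, I obtain
\[
d\big((E_{\Delta^\sigma (t), \mathrm{tr}_{2r}^s}^\sigma)^{\perp}\big) \geq d\big(C^{T,\sigma}_{\Delta^\sigma (t)}\big) \geq a_{t+1}+1,
\]
the last inequality being the BCH bound recalled in Section \ref{sect:subfield}, which applies because $a_0<a_1<\cdots<a_{t+1}$ are the smallest coset representatives, so $\Delta^\sigma (t) = \mathfrak{I}_{a_0}\cup\cdots\cup\mathfrak{I}_{a_t} \supseteq \{0,1,\ldots,a_{t+1}-1\}$.

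The point requiring care is that one only has the inclusion $E_{\Delta^\sigma (t), \mathrm{tr}_{2r}^s}^\sigma \supseteq \pi\big(E^{T,\sigma}_{\Delta^\sigma (t)}\big)$ and not, in general, an equality: a polynomial may evaluate into $\mathbb{F}_{p^{2s}}$ on $Z$ without doing so on all of $Z^T$, because of the extra relations modulo $\mathrm{tr}_{2r}^s(X)$. One must therefore be careful to exploit this inclusion only in the direction that keeps the estimate valid --- the dual being \emph{contained in} a shortening of a code of large minimum distance --- and to check separately that the same relations do no harm to the upper bound on the dimension. A more self-contained alternative would be to redo the Newton-identity computation of Propositions \ref{16}--\ref{18} directly with the cyclotomic-coset set $\Delta^\sigma (t)$ in place of an interval; this works as well, but is heavier precisely because $\Delta^\sigma (t)$ is not an interval.
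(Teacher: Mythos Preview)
Your proof is correct; both bounds are handled by arguments that differ in flavour from the paper's.

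For the dimension bound, the paper invokes $\dim E^{T,\sigma}_{\Delta^\sigma(t)}=\sum_l i_{a_l}$ and then the inclusion $Z\subset Z^T$. Your route through $\dim_{\mathbb{F}_{p^{2s}}}\!\big(C\cap\mathbb{F}_{p^{2s}}^N\big)\le\dim_{\mathbb{F}_{p^{2r}}}C\le|\Delta^\sigma(t)|$ is more explicit and, as you rightly observe, sidesteps the subtlety that $\pi\big(E^{T,\sigma}_{\Delta^\sigma(t)}\big)$ need not coincide with the full subfield-subcode at the trace roots; the paper's one-line justification leaves that point implicit.

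For the distance bound the two arguments are genuinely different. The paper stays at length $N$: it uses the interval $A=\{0,\dots,a_{t+1}-1\}\subseteq\Delta^\sigma(t)$ to obtain $(E_{\Delta^\sigma(t),\mathrm{tr}_{2r}^s})^{\perp}\subseteq(E_{A,\mathrm{tr}_{2r}^s})^{\perp}$ over $\mathbb{F}_{p^{2r}}$, bounds $d\big((E_{A,\mathrm{tr}_{2r}^s})^{\perp}\big)\ge a_{t+1}+1$ directly from the Vandermonde shape of the generator matrix of $E_{A,\mathrm{tr}_{2r}^s}$, and then descends to $\mathbb{F}_{p^{2s}}$ via Delsarte's theorem. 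You instead go up to length $N^T$ through the puncturing/shortening duality and invoke the BCH bound for $C^{T,\sigma}_{\Delta^\sigma(t)}$ already recorded in Section~\ref{sect:subfield}. Your argument is a tidier reduction to a result the paper has in hand, while the paper's version is self-contained at the trace-root level and avoids the shortening formalism; the final passage from the Euclidean to the Hermitian dual via the coordinatewise $p^s$-th power is the same in both.
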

\begin{proof}

By \cite[Theorem 4]{galindo-hernando}, we have that $\dim \left( E_{\Delta^\sigma (t)}^{T,\sigma} \right) = \sum_{l=0}^t i_{a_l}$. Here, since we only evaluate at the zeros of the $\mathrm{tr}_{2r}^s(X)$ ($Z \subset Z^T$), the first inequality holds.

With respect to the last inequality, setting $A = \{0,1, \ldots, a_{t+1} -1\}$, it holds that $A \subseteq \Delta^\sigma (t)$ and then one gets the inclusion of codes in $\mathbb{F}_{p^{2r}}$: $E_{A,\mathrm{tr}_{2r}^s} \subseteq E_{\Delta^\sigma (t), \mathrm{tr}_{2r}^s}$. Thus, the Euclidean dual of both codes satisfy $(E_{\Delta^\sigma (t),\mathrm{tr}_{2r}^s})^\perp \subseteq (E_{A,\mathrm{tr}_{2r}^s})^\perp$. Therefore,
\[
d \left((E_{\Delta^\sigma (t), \mathrm{tr}_{2r}^s})^\perp \right) \geq d \left( E_{A, \mathrm{tr}_{2r}^s}^\perp \right) \geq a_{t+1} +1,
\]
because the parity check matrix of $E_{A, \mathrm{tr}_{2r}^s}^\perp $ corresponds with the generator matrix of $E_{A, \mathrm{tr}_{2r}^s}$, which is a Vandermonde matrix. Considering subfield-subcodes over $\mathbb{F}_{p^{2s}}$, we have that
\[
\left( E_{\Delta^\sigma (t), \mathrm{tr}_{2r}^s}^\sigma \right)^{\perp} = \left( E_{\Delta^\sigma (t), \mathrm{tr}_{2r}^s}^\perp \right)^{\sigma} \subseteq \left(E_{A, \mathrm{tr}_{2r}^s}^\perp \right)^\sigma,
\]
where the equality follows from Delsarte Theorem \cite{delsarte}. Then,
\[
d\left( E_{\Delta^\sigma (t), \mathrm{tr}_{2r}^s}^\sigma \right)^{\perp} = d\left( E_{\Delta^\sigma (t), \mathrm{tr}_{2r}^s}^\perp \right)^{\sigma} \geq d\left(E_{A, \mathrm{tr}_{2r}^s}^\perp \right)^\sigma \geq a_{t+1} +1.
\]
This concludes the proof because the Euclidean and Hermitian dual of our codes are isometric.
\end{proof}

\begin{exa}\label{ex:t2}
Let $p=2$, $s=1$ and $r=4$. We will consider a code over $\mathbb{F}_{2^8}$ and a subfield-subcode over $\mathbb{F}_{2^2}$ as in Example \ref{ex:t1}. We have that $N=128$ and consider again $\Delta^\sigma(6) = \mathfrak{I}_{a_0} \cup \mathfrak{I}_{a_1} \cup \cdots \cup \mathfrak{I}_{a_6}$.
The code $\left( E_{\Delta^\sigma (6), \mathrm{tr}_{2r}^s}^\sigma \right)^{\perp_h}$ has parameters $$\left[N, \ge N - \sum_{l=0}^6 i_{a_l} ,a_7+1\right]_4 = [128,\ge 128-25,\ge 10]_4= [128,\ge 103,\ge 10]_4.$$

Moreover, we know that the dimension is strictly greater than $103$ since $\mathcal{T}_1(X)$ and $\mathcal{T}_2 (X)$ are equal modulo $\mathrm{tr}^1_{8} (X)$, because $\mathcal{T}_1(X) = X + X^4 + X^{16} + X^{64}$, $\mathcal{T}_2 (X) = X^2 + X^8 + X^{32} + X^{128}$, and
$\mathrm{tr}^1_{8} (X) = X + X^2 +  X^4 + X^8 + X^{16} + X^{32} + X^{64} +X^{128}$. Actually one can prove that the code $\left( E_{\Delta^\sigma (6), \mathrm{tr}_{2r}^s}^\sigma \right)^{\perp_h}$ has parameters $[128,104,10]_4$.
\end{exa}

\begin{rem}\label{rem:compa}
Examples \ref{ex:t1} and \ref{ex:t2} help to illustrate how to compare the codes obtained in the previous section --extended BCH codes (or subfield-subcodes of $J$-affine codes with $J = \emptyset$)--  with subfield-subcodes of evaluation codes at the trace roots. When considering dual codes, the advantage of the last code can be observed from the difference between the length and dimension since both codes have the same designed minimum distance. First observe that such a difference is equal to $\sum_{l=0}^t i_{a_l}$ in both cases (25 in our examples), however for the evaluation codes at the trace roots we have an advantage: their dimension may be strictly greater than the designed dimension  $N - \sum_{l=0}^t i_{a_l}$, as the previous example shows. This will allow us to get classical and quantum codes with excellent parameters. In general, there may be several relations modulo $\mathrm{tr}^s_{2r} (X)$ among the polynomials in Proposition \ref{23}, which increase the dimension of $\left( E_{\Delta^\sigma (t), \mathrm{tr}_{2r}^s}^\sigma \right)^{\perp_h}$.
\end{rem}

We conclude this section with our main result that shows how to construct stabilizer codes from subfield-subcodes over $\mathbb{F}_{p^{2s}}$. Recall that  $q=p^s$.

\begin{teo}
\label{26}
Let $N=q^{2n-1}$ the degree of the polynomial $\mathrm{tr}_{2r}^s (X)$, $N^T=p^{2r}$ and $\mathcal{A}= \{ a_0=0 <a_1<a_2 \cdots <a_z\}$ the set of representatives of the minimal cyclotomic sets $\mathfrak{I}_{a_i}$, $0 \leq i \leq z$ of $\mathcal{H}^T$ with respect to $p^{2s}$. Let $t \le z$ be an index such that
\[
a_{t}<q^n-\lfloor\frac{(q-1)}{2}\rfloor q^{n-1}-\cdots-\lfloor\frac{(q-1)}{2}\rfloor q-1.
\]
Then, with the notation as above, the following  inclusion holds
\begin{equation}
\label{261}
E_{\Delta^\sigma (t), \mathrm{tr}_{2r}^s}^\sigma \subseteq \left( E_{\Delta^\sigma (t), \mathrm{tr}_{2r}^s}^\sigma \right)^{\perp_h},
\end{equation}
where $\Delta^\sigma (t) = \mathfrak{I}_{a_0} \cup \mathfrak{I}_{a_1} \cup \cdots \cup \mathfrak{I}_{a_t}$.

As a consequence, we are able to construct a stabilizer code with parameters
\[
\left[\left[N, \geq N - 2 \sum_{a=0}^t i_a , \geq a_{t+1} +1 \right]\right]_q.
\]
\end{teo}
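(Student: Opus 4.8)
The plan is to reduce everything to the self-orthogonality statement for the full evaluation codes over $\mathbb{F}_{p^{2r}}$ established in the proof of Theorem \ref{19}, then push it down to the subfield-subcode level and read off the parameters from Theorem \ref{25}. First I would prove the inclusion \eqref{261}. Both $\mathcal{T}_a(\beta^l X^a)$-type generators of $E_{\Delta^\sigma(t),\mathrm{tr}_{2r}^s}^\sigma$ lie in $E_{\Delta^\sigma(t),\mathrm{tr}_{2r}^s}$ by Proposition \ref{5DCC} and Proposition \ref{23}, so it suffices to check that $E_{\Delta^\sigma(t),\mathrm{tr}_{2r}^s} \subseteq \left(E_{\Delta^\sigma(t),\mathrm{tr}_{2r}^s}\right)^{\perp_h}$ over the big field; the subfield-subcode inclusion follows because the Hermitian inner product of two vectors in $(\mathbb{F}_{p^{2s}})^N$ computed over $\mathbb{F}_{p^{2r}}$ still vanishes. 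For the big-field inclusion, I would argue exactly as in Theorem \ref{19}: it is enough to show $\mathrm{ev}_{\mathrm{tr}_{2r}^s}(X^i)\cdot_h \mathrm{ev}_{\mathrm{tr}_{2r}^s}(X^j) = 0$ for all monomials $X^i, X^j$ appearing (as representatives of classes) among the generators, i.e. for all $i,j$ in the cyclotomic cosets $\mathfrak{I}_{a_0},\ldots,\mathfrak{I}_{a_t}$. Since $\mathrm{ev}_{\mathrm{tr}_{2r}^s}(X^i)\cdot_h \mathrm{ev}_{\mathrm{tr}_{2r}^s}(X^j) = \mathrm{ev}_{\mathrm{tr}_{2r}^s}(X^{i + jq^n})\cdot \mathrm{ev}_{\mathrm{tr}_{2r}^s}(X^0)$, I apply Propositions \ref{17} and \ref{18} with $m=n$: the hypothesis guarantees each index lies in the allowed range provided every element of $\mathfrak{I}_{a_\ell}$ (for $\ell \le t$) is bounded by $q^n-\lfloor\frac{(q-1)}{2}\rfloor q^{n-1}-\cdots-\lfloor\frac{(q-1)}{2}\rfloor q-1$.

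This last point is where I expect the main obstacle, and it needs a short argument I would insert before invoking Propositions \ref{17} and \ref{18}. The hypothesis only bounds the representative $a_t$ (the minimum of $\mathfrak{I}_{a_t}$), not the other elements of the coset; an element of $\mathfrak{I}_{a_\ell}$ has the form $a_\ell p^{2s k} \bmod (p^{2r}-1)$, which can be large. The resolution is that the indices entering Proposition \ref{18} as $i$ and $j$ need only run over $\{0,1,\ldots,a_t\}$: indeed $A := \{0,1,\ldots,a_t\} \supseteq \mathcal{A}$-representatives gives nothing directly, so instead I would observe that $E_{\Delta^\sigma(t),\mathrm{tr}_{2r}^s}^\sigma \subseteq E_{\{0,1,\ldots,a_t\},\mathrm{tr}_{2r}^s}$ over $\mathbb{F}_{p^{2r}}$, because $\Delta^\sigma(t) = \bigcup_{\ell\le t}\mathfrak{I}_{a_\ell} \subseteq \{0,1,\ldots,a_t\}$ (every coset representative $a_\ell \le a_t$, and… no — a coset can contain elements exceeding $a_t$). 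The correct and cleaner route is: $E_{\Delta^\sigma(t),\mathrm{tr}_{2r}^s}^\sigma$ is spanned by vectors $\mathrm{ev}_{\mathrm{tr}_{2r}^s}(\mathcal{T}_{a_\ell}(\beta^l X^{a_\ell}))$, and each such vector is an $\mathbb{F}_{p^{2r}}$-linear combination of $\mathrm{ev}_{\mathrm{tr}_{2r}^s}(X^{a_\ell p^{2sk}})$; but modulo $\mathrm{tr}_{2r}^s(X)$ each monomial $X^{a_\ell p^{2sk}}$ equals some polynomial of degree $< q^{2n-1}$, and Proposition \ref{16} (not \ref{18}) already tells us $\mathrm{ev}_{\mathrm{tr}_{2r}^s}(\text{anything of degree}<q^{2n-1}-1)\cdot\mathrm{ev}_{\mathrm{tr}_{2r}^s}(X^0)=0$. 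So the Hermitian orthogonality of two generators reduces to checking that the product $X^i\cdot X^{jq^n}$, reduced mod $\mathrm{tr}_{2r}^s(X)$ to degree $<q^{2n-1}$, contains no $X^{q^{2n-1}-1}$ term — and this is precisely what Proposition \ref{18} handles once we know $i,j$ can be taken in the small range $[0, a_t]$ after using that cyclotomic conjugation commutes with $\mathrm{ev}_{\mathrm{tr}_{2r}^s}$ up to $p^{2s}$-power (a Frobenius-type identity). I would state and use this reduction explicitly: $\mathrm{ev}_{\mathrm{tr}_{2r}^s}(X^{a_\ell})\cdot_h\mathrm{ev}_{\mathrm{tr}_{2r}^s}(X^{a_{\ell'}})=0$ for all $\ell,\ell'\le t$ implies the same for every $\mathbb{F}_{p^{2s}}$-linear combination of Galois conjugates, since raising a coordinate to $p^{2s}$ is a field automorphism fixing $\mathbb{F}_{p^{2s}}$ and permuting the evaluation points $Z$.

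Having established \eqref{261}, the parameter count is immediate. By Theorem \ref{el1} applied to $C = \left(E_{\Delta^\sigma(t),\mathrm{tr}_{2r}^s}^\sigma\right)^{\perp_h}$, which satisfies $C^{\perp_h}\subseteq C$ by \eqref{261}, we obtain a stabilizer code of length $N$, dimension $2\dim C - N$, and minimum distance at least $d(C)$. From Theorem \ref{25} we have $\dim E_{\Delta^\sigma(t),\mathrm{tr}_{2r}^s}^\sigma \le \sum_{l=0}^t i_{a_l}$, hence $\dim C = N - \dim E_{\Delta^\sigma(t),\mathrm{tr}_{2r}^s}^\sigma \ge N - \sum_{l=0}^t i_{a_l}$, giving quantum dimension $2\dim C - N \ge N - 2\sum_{l=0}^t i_{a_l}$; and again from Theorem \ref{25}, $d(C) = d\left(E_{\Delta^\sigma(t),\mathrm{tr}_{2r}^s}^\sigma\right)^{\perp_h} \ge a_{t+1}+1$. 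This yields the claimed parameters $\left[\left[N, \ge N - 2\sum_{a=0}^t i_a, \ge a_{t+1}+1\right]\right]_q$. I would close by remarking, as in Remark \ref{rem:compa}, that the inequality on the dimension is genuinely an inequality — extra relations modulo $\mathrm{tr}_{2r}^s(X)$ among the generators of Proposition \ref{23} can make $\dim C$ strictly larger, which is exactly the source of the record-breaking examples.
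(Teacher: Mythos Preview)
Your argument for the inclusion \eqref{261} has two genuine gaps. First, you conflate two different Hermitian products throughout. The dual in \eqref{261} is taken over $\mathbb{F}_{p^{2s}}=\mathbb{F}_{q^2}$, so the conjugation is $x\mapsto x^{p^s}=x^q$; the Hermitian product in Theorem~\ref{19} is over $\mathbb{F}_{q^{2n}}$ and uses $x\mapsto x^{q^n}$. Your formula $\mathrm{ev}_{\mathrm{tr}_{2r}^s}(X^i)\cdot_h\mathrm{ev}_{\mathrm{tr}_{2r}^s}(X^j)=\mathrm{ev}_{\mathrm{tr}_{2r}^s}(X^{i+jq^n})\cdot\mathrm{ev}_{\mathrm{tr}_{2r}^s}(X^0)$ is the big-field identity and is not what is needed here; for vectors $\mathbf{b}\in(\mathbb{F}_{q^2})^N$ one has $b_j^{q^n}=b_j^q$ only when $n$ is odd, so ``prove self-orthogonality over $\mathbb{F}_{p^{2r}}$ and restrict'' does not work in general, and invoking Proposition~\ref{18} only with $m=n$ is the wrong target.

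Second, your Frobenius reduction is incomplete. Expanding the $\mathcal{T}_a$-generators (which is indeed the paper's route), the small-field Hermitian product of two generators is a linear combination of sums $\sum_{\alpha\in Z}\alpha^{a\,q^{l}+b\,q\cdot q^{m}}$ with $a,b\in\{a_0,\ldots,a_t\}$ and $l,m\in\{0,\ldots,2n-1\}$. Your assertion that knowing $\mathrm{ev}(X^{a_\ell})\cdot_h\mathrm{ev}(X^{a_{\ell'}})=0$ for the representatives ``implies the same for every Galois conjugate'' is false as stated: applying Frobenius to only one factor does not preserve Hermitian orthogonality, so one cannot reduce to a single pair $(a_\ell,a_{\ell'})$. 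What the paper actually does is strip the common power $q^{\min(l,m)}$ (using that $\alpha\mapsto\alpha^q$ permutes $Z$) and then split into two cases: if $m-l\le n-1$ one lands on $\mathrm{ev}(X^{a+bq^{m-l+1}})\cdot\mathrm{ev}(X^0)$ with $m-l+1\le n$, and Proposition~\ref{18} applies; if $m-l\ge n$ one first raises to the $q^n$-th power, uses $q^{2n}$-periodicity of exponents, and again reduces to an exponent covered by Proposition~\ref{18}. This case split is the missing idea, and it is why Proposition~\ref{18} is stated for the full range $0<m\le n$ rather than just $m=n$. Once \eqref{261} is established, your derivation of the parameters from Theorems~\ref{el1} and~\ref{25} is correct.
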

\begin{proof}
By Theorem \ref{25}, it suffices to prove the inclusion in (\ref{261}). We shall show that
\begin{equation}
\label{262}
\mathrm{ev}_{\mathrm{tr}_{2r}^s} \left( \mathcal{T}_{a_i} (\beta_1^{k_1} X^{a_i}) \right) \cdot_h \mathrm{ev}_{\mathrm{tr}_{2r}^s} \left( \mathcal{T}_{a_j} (\beta_2^{k_2} X^{a_j}) \right) = 0,
\end{equation}
for $\beta_1$ (respectively, $\beta_2$) a primitive element in $\mathbb{F}_{p^{2 s i_{a_i}}}$ (respectively, in $\mathbb{F}_{p^{2 s i_{a_j}}}$), for $0 \leq k_1 \leq i_{a_i} -1$ (respectively, for $0 \leq k_2 \leq i_{a_j} -1$) and $i, j \in \{0, 1, \ldots, t\}$. This will conclude the proof by Proposition \ref{23}.

The left hand side in (\ref{262}) is a summation, up to constants that depend on $\beta_1$ and $\beta_2$, of Euclidean products of the form
\begin{equation}
\label{263}
\mathrm{ev}_{\mathrm{tr}_{2r}^s} \left( X^{a  q^l + b  q  q^m} \right) \cdot \mathrm{ev}_{\mathrm{tr}_{2r}^s} \left( X^{0} \right),
\end{equation}
where $a, b$ are the corresponding representatives in $\mathcal{A}$. We can assume that
$a, b <q^n-\lfloor\frac{(q-1)}{2}\rfloor q^{n-1}-\cdots-\lfloor\frac{(q-1)}{2}\rfloor q-1$; and $l, m  \in \{0, 1, \ldots, 2n-1\}$.

We claim that each product of the form given in (\ref{263}) equals zero, which proves Equality (\ref{262}). Indeed, without loss of generality, we may assume that $m \geq l$ and divide the proof in two parts.

First, suppose that $m-l \leq n-1$. Then
\begin{equation}
\label{264}
\mathrm{ev}_{\mathrm{tr}_{2r}^s} \left( X^{a  q^l + b  q  q^m} \right) \cdot \mathrm{ev}_{\mathrm{tr}_{2r}^s} \left( X^{0} \right) = \left(
\mathrm{ev}_{\mathrm{tr}_{2r}^s} \left( X^{a  + b  q^{m-l+1}} \right) \cdot \mathrm{ev}_{\mathrm{tr}_{2r}^s} \left( X^{0} \right) \right)^{q^l},
\end{equation}
because of the characteristic of the field. Now, Proposition \ref{18} proves that the right hand side of Equality (\ref{264}) is equal to zero since $m -l +1 \leq n$, which concludes the first part.

Finally,  assume that $m-l \geq n$, then $l \leq m -n \leq  (2n -1)-n = n-1$ and $m= n + n_1 \leq 2n -1$, thus $n_1 < n$. In addition, Formula (\ref{263}) is equal to zero if and only if
\[
\left(\mathrm{ev}_{\mathrm{tr}_{2r}^s} \left( X^{a  q^l + b  q q^{n+n_1}} \right) \cdot \mathrm{ev}_{\mathrm{tr}_{2r}^s} \left( X^{0} \right) \right)^{q^n}
\]
is equal to zero. This last expression can also be written as
\[
\mathrm{ev}_{\mathrm{tr}_{2r}^s} \left( X^{a  q^{l+n} + b    q^{2n+n_1+1}} \right) \cdot \mathrm{ev}_{\mathrm{tr}_{2r}^s} \left( X^{0} \right).
\]
Since we are evaluating elements in the field $\mathbb{F}_{p^{2r}} = \mathbb{F}_{q^{2n}}$, it suffices to prove
\begin{equation}
\label{265}
\mathrm{ev}_{\mathrm{tr}_{2r}^s} \left( X^{a  q^{l+n} + b  q^{n_1+2}} \right) \cdot \mathrm{ev}_{\mathrm{tr}_{2r}^s} \left( X^{0} \right) =0,
\end{equation}
which holds whenever
\[
\left(\mathrm{ev}_{\mathrm{tr}_{2r}^s} \left( X^{a q^{l+n-n_1-2}+b} \right) \cdot \mathrm{ev}_{\mathrm{tr}_{2r}^s} \left( X^{0} \right)\right)^{q^{n_1+2}}
\]
is equal to zero. Note that this holds by Proposition \ref{18} since $l+n-n_1 -2 <n$. In fact, $n+n_1 -l >n > n-1$ and then $l-n_1 -1 <0$. This concludes the proof.
\end{proof}

\begin{exa}\label{ex:t3}
Let $p=2$, $s=1$, $r=4$, $n=4$ and $q=2$. Consider the classical subfield-subcode over $\mathbb{F}_4$, $ E_{\Delta^\sigma (6), \mathrm{tr}_{2r}^s}^\sigma$, given in Example \ref{ex:t2}. Since $a_6 = 7 < 15 = 2^4 -1 = q^n-\lfloor\frac{(q-1)}{2}\rfloor q^{n-1}-\cdots-\lfloor\frac{(q-1)}{2}\rfloor q-1$, we can apply Theorem \ref{26} and therefore it is self-orthogonal with respect to the Hermitian inner product. Its Hermitian dual has parameters $[128,104,10]_4$, therefore, by Theorem \ref{el1}, we obtain a stabilizer code  with parameters $[[128,2 \cdot 104 - 128 ,10]]_2 = [[128, 80,10]]_2$. This code is a record at \cite{codet} as we will see in Example \ref{ex:uno} in Section \ref{sect3}.
\end{exa}

To end this section, we consider another construction of linear codes: we have shown that $\mathrm{ev}_{\mathrm{tr}_{2r}^s}$ evaluates at the points in $Z$, which is a subset of the zero-set $Z^T$ of $X^{p^{2r}} -X$. By \cite[Proposition 1]{QINP}, Proposition \ref{17} also holds for $\mathrm{ev}^T$ when, as above,
\[
k <q^n-\lfloor\frac{(q-1)}{2}\rfloor q^{n-1}-\cdots-\lfloor\frac{(q-1)}{2}\rfloor q-1.
\]
Since $Z \subset Z^T$, for $Z^T \setminus Z = \{\gamma_1, \gamma_2, \ldots, \gamma_{N^C}\}$,  where $N^C = N^T -N$, and considering the evaluation map
\[
\mathrm{ev}^C: \frac{\mathbb{F}_{p^{2r}}[X]}{\left\langle (X^{N^T} - X) / \mathrm{ev}_{2r}^s (X) \right\rangle} \rightarrow \mathbb{F}_{p^{2r}}^{N^C},
\]
given by $\mathrm{ev}^C (f) = f(\gamma_1, \gamma_2, \ldots, \gamma_{N^c})$, one gets that, with the same reasoning, our results hold for  these linear and stabilizer quantum codes as well. We will refer to  these linear codes (respectively, their subfield-subcodes and the corresponding stabilizer codes) as {\it complementary codes} (respectively, their subfield-subcodes  and the stabilizer codes obtained from them).

\section{Examples}
\label{sect3}
In this section we give the parameters of a number of stabilizer codes obtained or derived from our development. First, we recall that Theorem \ref{26} shows how to use subfield-subcodes for constructing stabilizer codes over $\mathbb{F}_{q}$ with length $N=q^{2n-1}$, for $q=p^s$, where $p$ is a prime number and $s$ and $n$ are positive integers. The same reasoning gives rise to codes of length $N-1$, simply by not evaluating at the first element in the set $Z$ in Section \ref{sect1} (that is, at $\alpha_1=0$ or by not considering the coset $\mathfrak{I}_0$).

In addition, we emphasize that Theorem \ref{26} determines stabilizer quantum codes with designed distance, a lower bound for the dimension is also given. In a large number of cases, the dimension of our codes is strictly larger than the bound given in Theorem \ref{26}. Note that, in contrast with the minimum distance, the computation of the dimension of a linear code is not computationally intense and can be easily performed.

In the first two examples, we will detail the different values of $p, q, n$ and the considered length. However, for the sake or brevity and since it is straigthforward to deduce them from the paramenters of the codes, we do not give further details in the remaining examples. In Example \ref{ex:uno}, we obtain codes, both classical and quantum, that are records in \cite{codet}. For the rest of the examples there is no table of codes available (the previous table only contains binary stabilizer codes) and we indicate which codes exceed the quantum Gilbert-Varshamov bounds (QGVB, for short) \cite{mat, feng, kkk}.

\begin{exa}\label{ex:uno}
We consider the same setting as in examples \ref{ex:t1}, \ref{ex:t2} and \ref{ex:t3}. Let $p=2$, $s=1$, $n=4$. We obtain codes with length $q^{2n-1} = 2^7 =128$ over $q^{2s} = 4$. As a consequence, we are able to get 50 linear codes over $\mathbb{F}_4$ improving the parameters in \cite{codet}. In fact, we obtain two linear codes with parameters $[128,79,20]_4$ and $[128,75,22]_4$ improving the previous best known linear codes $[128,79,19]_4$ and $[128,75,21]_4$. We are also able to construct a $[128,85,16]_4$ code (no construction was known for such parameters in \cite{codet}). Then, by shortening the above codes, we obtain 50 linear codes over $ \mathbb{F}_4$ which are records at \cite{codet}. Their parameters can be found in Table \ref{TTabla1}. For the sake of brevity we only display some of them because their parameters are clear from their construction.

\begin{table}[ht]
%\caption{Nonlinear Model Results} % title of Table
\centering
%\begin{center}
\begin{tabular}{||c|c|c||c|c|c||c|c|c||c|c|c||}
  \hline \hline
  % after \\: \hline or \cline{col1-col2} \cline{col3-col4} ...
 $n$ & $k$ & $d$  & $n$ & $k$ & $d$& $n$ & $k$ & $d$  & $n$ & $k$ & $d$ \\
  \hline \hline
 127& 84& 16& 126& 83& 16&125& 82& 16&124& 81& 16 \\
 123& 80& 16& 122& 79& 16&127& 78& 20&126& 77& 20 \\
 125& 76& 20& 124& 75& 20&123& 74& 20&122& 73& 20 \\
 121& 72& 20& 120& 71& 20&\ldots& \ldots& \ldots&105& 56& 20 \\
 127& 74& 22& 126& 73& 22& \ldots&  \ldots& \ldots &108& 55& 22 \\
\hline
 \hline
\end{tabular}
%\captionof{table}{Stabilizer affine variety ones codes over $\mathbb{F}_2$}
\caption{Linear codes over $\mathbb{F}_4$ which are records}
\label{TTabla1}
%\end{center}
\end{table}
These linear codes give rise to stabilizer quantum codes over $\mathbb{F}_2$, which by Theorem \ref{el1} are also records in the table \cite{codet}. We get stabilizer codes with parameters $[[128,80,10]]_2$ improving $[[128,80,9]]_2$; $[[128,72,11]]_2$ improving $[[128,72,10]]_2$; $[[128,66,12]]_2$ improving $[[128,66,11]]_2$ and $[[128,58,14]]_2$ improving $[[128,58,12]]_2$. Either puncturing or taking subcodes of the previous codes, we obtain binary stabilizer codes with  parameters as in Table \ref{TTabla2}.
\begin{table}[ht]
%\caption{Nonlinear Model Results} % title of Table
\centering
%\begin{center}
\begin{tabular}{||c|c|c||c|c|c||c|c|c||c|c|c||}
  \hline \hline
  % after \\: \hline or \cline{col1-col2} \cline{col3-col4} ...
 $n$ & $k$ & $d$  & $n$ & $k$ & $d$& $n$ & $k$ & $d$  & $n$ & $k$ & $d$ \\
  \hline \hline
 128& 79& 10& 127& 80& 9&128& 71& 11&128& 65& 12 \\
 128& 64& 12& 128& 63& 12&128& 57& 14&128& 56& 14 \\
 128& 55& 14& 127& 58& 13&127& 57& 13&127& 56& 13 \\
\hline
 \hline
\end{tabular}
%\captionof{table}{Stabilizer affine variety ones codes over $\mathbb{F}_2$}
\caption{Quantum codes over $\mathbb{F}_2$ which are records}
\label{TTabla2}
%\end{center}
\end{table}
\end{exa}

\begin{exa}\label{ex:dos}
In this example, let $p=s=n=2$. We get stabilizer codes over $\mathbb{F}_4$. Some of these stabilizer codes with length $N = 64$, all of them with parameters that exceed the QGVB, are displayed in Table \ref{tabla1}.
\begin{table}[ht]
%\caption{Nonlinear Model Results} % title of Table
\centering
%\begin{center}
\begin{tabular}{||c|c|c||c|c|c||c|c|c||c|c|c||}
  \hline \hline
  % after \\: \hline or \cline{col1-col2} \cline{col3-col4} ...
 $n$ & $k$ & $d$  & $n$ & $k$ & $d$& $n$ & $k$ & $d$  & $n$ & $k$ & $d$ \\
  \hline \hline
 64& 58& 3& 64& 54& 4&64& 50& 5&64& 48& 6 \\
 64& 44& 7&64& 40& 8& 64& 36& 9& 64& 34& 10 \\
 64& 30& 11&64& 26& 12&64& 22& 13&64& 20& 14\\
\hline
 \hline
\end{tabular}
%\captionof{table}{Stabilizer affine variety ones codes over $\mathbb{F}_2$}
\caption{Stabilizer codes over $\mathbb{F}_4$ of length $64$}
\label{tabla1}
%\end{center}
\end{table}

In the case where we do not evaluate at zero, their length is $63$ and we get stabilizer codes over  $\mathbb{F}_4$ with parameters as in Table \ref{tabla2}. Again, all the parameters of the presented codes exceed the QGVB.
\begin{table}[ht]
%\caption{Nonlinear Model Results} % title of Table
\centering
%\begin{center}
\begin{tabular}{||c|c|c||c|c|c||c|c|c||c|c|c||}
  \hline \hline
  % after \\: \hline or \cline{col1-col2} \cline{col3-col4} ...
 $n$ & $k$ & $d$  & $n$ & $k$ & $d$& $n$ & $k$ & $d$  & $n$ & $k$ & $d$ \\
  \hline \hline
 63& 59& 3& 63& 55& 4&63& 51& 5&63& 49& 6 \\
 63& 45& 7&63& 41& 8& 63& 37& 9& 63& 35& 10 \\
 63& 31& 11&63& 27& 12&63& 23& 13&63& 21& 14\\
\hline
 \hline
\end{tabular}
%\captionof{table}{Stabilizer affine variety ones codes over $\mathbb{F}_2$}
\caption{Stabilizer codes over $\mathbb{F}_4$ of length $63$}
\label{tabla2}
%\end{center}
\end{table}

Notice that we get a large improvement with respect to the codes in \cite[Table III]{lag3}, and larger minimum distances ($10$ is the largest minimum distance in \cite[Table III]{lag3}).

We may consider quantum codes coming from complementary codes as well. Their length is $N^C= N^t - N = q^{2n} - N = 256 - 64 =192$. The parameters of some codes exceeding the QGVB are displayed in Table \ref{tabla3}. We have not found better codes over $\mathbb{F}_4$ with this length in the literature.
\begin{table}[ht]
%\caption{Nonlinear Model Results} % title of Table
\centering
%\begin{center}
\begin{tabular}{||c|c|c||c|c|c||c|c|c||c|c|c||}
  \hline \hline
  % after \\: \hline or \cline{col1-col2} \cline{col3-col4} ...
 $n$ & $k$ & $d$  & $n$ & $k$ & $d$& $n$ & $k$ & $d$  & $n$ & $k$ & $d$ \\
  \hline \hline
192& 186& 3& 192& 182& 4&192& 178& 5&192& 174& 6 \\
 192& 170& 7&192& 166& 8& 192& 162& 9& 192& 158& 10 \\
 192& 154& 11&192& 150& 12&192& 146& 13&192& 21& 14\\
\hline
 \hline
\end{tabular}
%\captionof{table}{Stabilizer affine variety ones codes over $\mathbb{F}_2$}
\caption{Stabilizer codes over $\mathbb{F}_4$ of length $192$}
\label{tabla3}
%\end{center}
\end{table}

\end{exa}

\begin{exa}\label{ex:tres}
Table \ref{tabla4}  contains some stabilizer codes over $\mathbb{F}_3$ obtained with our procedure with length 242, 243 and 486. Our codes with length 242 and distance 5, 6, 10 and 11 exceed the the QGVB. Every code we give with length 243, but those with distance 15, 16 or 17, exceed the QGVB. Finally all codes with length 486 exceed that bound.
\begin{table}[ht]
%\caption{Nonlinear Model Results} % title of Table
\centering
%\begin{center}
\begin{tabular}{||c|c|c||c|c|c||c|c|c||c|c|c||}
  \hline \hline
  % after \\: \hline or \cline{col1-col2} \cline{col3-col4} ...
 $n$ & $k$ & $d$  & $n$ & $k$ & $d$& $n$ & $k$ & $d$  & $n$ & $k$ & $d$ \\
  \hline \hline
242& 220& 5& 242& 214& 6&242& 208& 7&242& 202& 8 \\
 242& 196&10&242& 190& 11& 242& 184&12& 242& 178& 13 \\
 242& 172& 14&242& 166& 15&242& 160&16&242& 154& 17\\
 \hline
 243& 225& 5& 243& 219& 6&243& 213& 7&243& 207& 8 \\
 243& 201&9&243& 195& 11& 243& 189&12& 243& 183& 13 \\
 243& 177& 14&243& 171& 15&243& 165&16&243& 159& 17\\
\hline
486& 466& 5& 486& 460& 6&486& 454& 7&486& 448& 8 \\
 486& 442&9&486& 436& 11& 486& 430&12& 486& 424& 13 \\
 486& 418& 14&486& 412& 15&486& 406&16&486& 400& 17\\
 \hline
 \hline
\end{tabular}
%\captionof{table}{Stabilizer affine variety ones codes over $\mathbb{F}_2$}
\caption{Stabilizer codes over $\mathbb{F}_3$ of lengths $243$, $242$ and $486$}
\label{tabla4}
%\end{center}
\end{table}
\end{exa}

\begin{exa}\label{ex:cuatro}
Some stabilizer codes over $\mathbb{F}_5$ obtained with our procedure with length 124, 125 and 500 can be found in Table \ref{tabla5}. Our codes exceed the QGVB, excepting those with length 124 and distance 5 or 15. Notice that, again, we obtain a great improvement with respect to the codes with length 124 in \cite[Table III]{lag3}. In addition, the minimum distance of our codes can be much larger than in \cite{lag3}.
\begin{table}[ht]
%\caption{Nonlinear Model Results} % title of Table
\centering
%\begin{center}
\begin{tabular}{||c|c|c||c|c|c||c|c|c||c|c|c||}
  \hline \hline
  % after \\: \hline or \cline{col1-col2} \cline{col3-col4} ...
 $n$ & $k$ & $d$  & $n$ & $k$ & $d$& $n$ & $k$ & $d$  & $n$ & $k$ & $d$ \\
  \hline \hline
124& 108& 5& 124& 106& 6&124& 102& 7&124& 98& 8 \\
 124& 94&9&124& 90& 10& 124& 88&11& 124& 84& 12 \\
 124& 80& 13&124& 76& 14&124& 72&15&124& 70& 16\\
 \hline
 125& 111& 5& 125& 107& 6&125& 105& 7&125& 101& 8 \\
 125& 97&9&125& 93& 10& 125& 89&11& 125& 87& 12 \\
 125& 83& 13&125& 79& 14&125& 75&15&125& 71& 16\\
\hline
500& 462& 11& 500& 458& 12&500& 454& 12&500& 450& 14 \\
 500& 446&15&500& 442& 16& 500& 438&17& 500& 434& 18 \\
 500& 430& 19&500& 426& 20&500& 422&21&500& 418& 22\\
 \hline
 \hline
\end{tabular}
%\captionof{table}{Stabilizer affine variety ones codes over $\mathbb{F}_2$}
\caption{Stabilizer codes over $\mathbb{F}_5$ of lengths $124$, $125$ and $500$}
\label{tabla5}
%\end{center}
\end{table}

\end{exa}

\begin{exa}\label{ex:cinco}
Finally, we display  Table \ref{tabla6}  containing stabilizer codes with length 342 and 2058 (from complementary codes) over $\mathbb{F}_7$. All the codes exceed the QGVB. Moreover, those with length 342 provide a great improvement with respect to the codes given in \cite[Table III]{lag3}. And as before, the minimum distance of our codes can be much larger than in \cite{lag3}.
\begin{table}[ht]
%\caption{Nonlinear Model Results} % title of Table
\centering
%\begin{center}
\begin{tabular}{||c|c|c||c|c|c||c|c|c||c|c|c||}
  \hline \hline
  % after \\: \hline or \cline{col1-col2} \cline{col3-col4} ...
 $n$ & $k$ & $d$  & $n$ & $k$ & $d$& $n$ & $k$ & $d$  & $n$ & $k$ & $d$ \\
  \hline \hline
342& 326& 5& 342& 322& 6&342& 318& 7&342& 316& 8 \\
 342& 312&9&342& 308& 10& 342& 304&11& 342& 300& 12 \\
 342& 296& 13&342& 292& 14&342& 290&15&342& 286& 16\\
 342& 282& 17& 342& 278& 18&342& 274& 19&342& 270& 20 \\
 \hline
2058& 2020& 11& 2058& 2016& 12&2058& 2012& 12&2058& 2008& 14 \\
 2058& 2004&15&2058& 2000& 16& 2058& 1996&17& 2058& 1992& 18 \\
 2058& 1988& 19&2058& 1984& 20&2058& 1980&21&2058& 1976& 22\\
 2058& 1972& 23& 2058& 1968& 24&2058& 1964&25&2058& 1960& 26\\
 \hline
 \hline
\end{tabular}
%\captionof{table}{Stabilizer affine variety ones codes over $\mathbb{F}_2$}
\caption{Stabilizer codes over $\mathbb{F}_7$ of lengths $324$ and $2058$}
\label{tabla6}
%\end{center}
\end{table}
\end{exa}

\begin{rem}\label{rem:Grassl}
We have not performed an exhaustive search of good codes. We expect that more records can be found following this construction. For instance, Markus Grassl, with the setting as in Example \ref{ex:uno}, has found record complementary codes with the following parameters: $[127, 39, 44]_4$, $[127, 40, 43]_4$, $[127, 41, 42]_4$, $[128, 75, 22]_4$, $[128, 79, 20]_4$, $[128, 93, 14]_4$.

\end{rem}

\section*{Acknowledgment}

The authors thank Markus Grassl for pleasant discussions and for providing the codes in Remark \ref{rem:Grassl}.

\end{document}